\def\asgn{\leftarrow}
\newcommand\unfrac\frac
\newcommand\cutcolor{\texttt{cut-or-color}}
\newcommand\cutcolornew{\texttt{cut-or-color}}
\newcommand\monochromatic{\texttt{monochromatic}}
\newcommand\sidecut{\texttt{best-side-cut}}
\def\regul{\texttt{regularize}}
\newcommand\avg[1]{\textnormal{avg}\,{#1}}
\newcommand\wt{\widetilde}
\newcommand\tO{{\widetilde O}\/}
\newcommand\tto{{\widetilde o}}
\newcommand\tomega{{\widetilde \omega}}
\newcommand\tOmega{{\widetilde\Omega}}
\newcommand\ds{\Delta}
\newcommand\dt{\delta}
\newcommand\nh{\Psi}
\newcommand\tcite[1]{~\cite{#1}}
\newcommand\tref[1]{~\ref{#1}}
\newtheorem{proposition}{Proposition}
\newtheorem{lemma}[proposition]{Lemma}
\newtheorem{theorem}[proposition]{Theorem}
\newcommand{\qed}{\hbox{\rule{6pt}{6pt}}}
\newenvironment{proof}[1][]{\paragraph{Proof{#1}}}{\hfill\qed\medskip\\}
\newcommand\drop[1]{}
\newcommand\req[1]{(\ref{#1})}
\newcommand\packlist{\setlength{\itemsep}{1pt}
\setlength{\parskip}{0pt}\setlength{\parsep}{0pt}}
\newcounter{invari}
\newenvironment{invariants}{\begin{enumerate}\packlist\setcounter{enumi}{\value{invari}}
\renewcommand\theenumi{(\roman{enumi})}
\renewcommand\labelenumi\theenumi}{\setcounter{invari}{\value{enumi}}\end{enumerate}}
\begin{document}
\setcounter{page}0

%
\title{Better coloring of 3-colorable graphs}

\author{{\em Ken-ichi Kawarabayashi}\thanks{Ken-ichi Kawarabayashi's
   research is partly supported by JSPS Kakenhi 22H05001, by JSPS Kakenhi JP20A402 and by JST ASPIRE JPMJAP2302.}\\National Institute of Informatics \& The University of Tokyo, Tokyo, Japan\\
  \texttt{k\_keniti@nii.ac.jp}
  \and {\em Mikkel Thorup}\thanks{Mikkel Thorup's research supported by VILLUM Foundation grant 16582, Basic Algorithms Research Copenhagen (BARC).}\\
  University of Copenhagen, Denmark\\
 \texttt{mikkel2thorup@gmail.com}
 \and {\em Hirotaka Yoneda}\thanks{Hirotaka Yoneda's 
   research is partly supported by JSPS Kakenhi 22H05001 and by JST ASPIRE JPMJAP2302.}\\The University of Tokyo, Tokyo, Japan\\
  \texttt{squar37@gmail.com}}

\maketitle
\thispagestyle{empty}

\begin{abstract}
We consider the problem of coloring a 3-colorable graph in polynomial
time using as few colors as possible. 
This is one of the most challenging problems in graph algorithms. 
In this paper using Blum's notion of ``progress'', we develop a new combinatorial algorithm for the following: Given any 3-colorable graph with minimum degree $\ds>\sqrt n$, 
we can, in polynomial time, make progress towards a $k$-coloring for some $k=\sqrt{n/\ds}\cdot n^{o(1)}$. 

We balance our main result with the best-known semi-definite(SDP) approach which we 
use for degrees below
$n^{0.605073}$. As a
result, we show that $\tO(n^{0.19747})$ colors suffice for coloring 3-colorable graphs. This improves on the previous best bound of $\tO(n^{0.19996})$ by Kawarabayashi and Thorup \cite{KT17}. 
\end{abstract}

\newpage
\section{Introduction}
Recognizing 3-colorable graphs is a classic problem, which was proved to be NP-hard by
Garey, Johnson, and Stockmeyer \cite{GJS76} in 1976,
and was one of the
prime examples of NP-hardness mentioned by Karp in 1975
\cite{Karp75}. In this paper, we also focus on 3-colorable graphs.

Given a 3-colorable graph,
that is a graph with an unknown 3-coloring, we try to color it in
polynomial time using as few colors as possible. The algorithm
is allowed to give up if the input graph is not 3-colorable.
If a coloring is produced, we can always check that it is valid even if
the input graph is not 3-colorable. This challenge has
engaged many researchers in theoretical computer science for a long time, yet we are far from understanding it. Wigderson \cite{Wig83} is the first to give a combinatorial algorithm for $O(n^{1/2})$ coloring for a 3-colorable graph with $n$ vertices in 1982. Berger and
Rompel \cite{BR90} then improved this result to $O((n/(\log n))^{1/2})$\footnote{If the base is not specified, $\log$ 
denotes $\log_2$.}. Blum\tcite{Blum94} gave the first polynomial improvements, based on combinatorial algorithms, first to
$\tO(n^{2/5})$ colors in 1989, and then to $\wt O(n^{3/8})$ colors
in 1990.

The next big step in 1998 was given by Karger, Motwani, and Sudan
\cite{KMS98} using semi-definite programming (SDP). This improvement uses Goemans and Williamson's seminal SDP-based algorithm for the max-cut problem in 1995 \cite{GW95}.
Formally, for a graph with maximum degree $\Delta_{\max}$,
Karger et al. \cite{KMS98} were able to get down to $O(\Delta_{\max}^{1/3})$ colors. Combining
this with Wigderson's combinatorial algorithm, they got down to $\tO(n^{1/4})$ colors.
Later in 1997, Blum and Karger  \cite{BK97} combined the SDP from \cite{KMS98}
with Blum's \cite{Blum94} combinatorial algorithm, yielding an improved bound of $\wt O(n^{3/14}) =\tO(n^{0.2142})$. Later improvements on
SDP have also been combined with Blum's
combinatorial algorithm. In 2006, Arora, Chlamtac, and Charikar \cite{ACC06} got
down to $O(n^{0.2111})$ colors. The proof in \cite{ACC06} is based on
the seminal result of Arora, Rao, and Vazirani \cite{ARV09} 
which gives an $O(\sqrt{\log n})$ approximation algorithm for the sparsest cut problem.
Further progress in SDP is made; Chlamtac \cite{Chl07} got down to $O(n^{0.2072})$
colors. 

In 2012, Kawarabayashi and Thorup \cite{KT12}  made the first improvements on combinatorial algorithms over Blum\tcite{Blum94}; they showed $\wt O(n^{4/11})$, improving over Blum's $\wt O(n^{3/8})$ bounds from his combinatorial algorithm. They also got down to $\tO(n^{0.2049})$ colors by combining with SDP results. Finally, they \cite{KT17}
got down to $\tO(n^{0.19996})$ colors, by further utilizing SDP and the combinatorial algorithm in \cite{KT12}. 

On the lower bound side, for general graphs, the chromatic number is
inapproximable in polynomial time within factor $n^{1-\epsilon}$ for
any constant $\epsilon > 0$, unless
\coRP=\NP\ \cite{FK,Ha}.  This
lower bound is much higher than the above-mentioned upper bounds for
3-colorable graphs.  The lower bounds known for coloring 
3-colorable graphs are much weaker.  We know that it is \NP-hard to get
down to $5$ colors \cite{VGK04,KLS00}. Dinur, Mossel and Regev \cite{DMR09} proved that it is hard to color 3-colorable graphs
with any constant
number of colors (i.e., $O(1)$ colors) based on a variant of the
Unique Games Conjecture. Same hardness was proved based on weaker conjecture in \cite{GS20}. Stronger hardness bounds are known if the graph is
only ``almost'' 3-colorable \cite{DHSV2015}.
Some integrality gap results \cite{FLS04,KMS98,Sze94} show that the
simple SDP relaxation has integrality gap at least $n^{0.157}$ but such
a gap is not known for SDPs using levels of Lasserre lifting
\cite{ACC06,AG11,Chl07}.

\subsection{Interplay  between combinatorial and SDP approaches and our main result}

In this paper, we will further
improve the coloring
bound for 3-colorable graphs to $\tO(n^{0.19747})$ colors.
However, what makes
our results interesting is that we converge towards a limit for known
combinatorial approaches as explained below.

To best describe our own and previous results, we need Blum's\tcite{Blum94} notion of {\em progress towards $k$-coloring}. The basic
idea is that if we for any 3-colorable graph can make
progress towards $k$-coloring in polynomial time, then we can $\tO(k)$-color any 3-colorable graph. 

Reductions from  \cite{ACC06,BK97,KT17} show
that for any parameter $\ds$, it suffices to find progress towards
$k$-coloring for 3-colorable graphs that have either minimum degree $\Delta$ or maximum degree $\Delta$. High minimum degree has been good for combinatorial approaches while low
maximum degree has been good for SDP approaches.  The best bounds are
obtained by choosing $\Delta$ to balance between the best SDP and
combinatorial algorithms. 

\paragraph{Purely combinatorial bounds}
On the combinatorial side,
for 3-colorable graphs with minimum degree $\ds=n^{1-\Omega(1)}$, the previous bounds for progress have followed the following sequence 
\begin{equation}\label{comb}
\tO((n/\ds)^{i/(2i-1)}).
\end{equation}
Here $i=1$ is by Wigderson in STOC'82 (covered by \cite{Wig83}), while $i=2$ is by Blum from STOC'89, and $i=3$ is by
Blum at FOCS'90 (both covered by \cite{Blum94}). Finally, $i=4$ is by Kawarabayashi and Thorup from FOCS'12 (covered by \cite{KT17}). 

Since we can trivially color a graph with maximum
degree $\ds$ with $\ds+1$ colors, \req{comb}
also yields combinatorial coloring bounds with $\tO(n^{i/(3i-1})$ colors assuming
\req{comb} for any $\ds\geq n^{i/(3i-1)}$. Thus
we get the combinatorial $\tO(n^{1/2})$-coloring of Wigderson, the $\tO(n^{2/5})$- and $\tO(n^{3/8})$-colorings of Blum and the 
$\tO(n^{4/11})$-coloring of Kawarabayashi and Thorup.

\paragraph{Combination with SDP}
For 3-colorable graphs with maximum degree $\Delta$, Karger et al. \cite{KMS98} used SDP to get down to $O(\ds^{1/3})$ colors. In
combination with \req{comb} this means that
we can color 3-colorable graphs with
$\tO(n^{i/(5i-1)})$ colors. This gave
them $\tO(n^{1/4})$ colors in combination
with Wigderson's $i=1$, and Blum and Karger 
got $\tO(n^{3/14})$ using Blum's $i=3$ from
\cite{Blum94}. Later improvements in SDP 
\cite{ACC06,Chl07} also got
combined with Blum's result \cite{Blum94}.

\paragraph{Combinatorial algorithm for balance with SDP}
We now note that \req{comb} converges to $(n/\ds)^{1/2}$ from above. Balancing with
a SDP bound which is as good or better than
$\ds^{1/3}$, we would only need \req{comb} 
to hold for $\ds\geq n^{3/5}$. This idea
was used in \cite{KT17} which got \req{comb} to
hold for $i=12$, but only for minimum degree
$\ds\geq n^{0.61674333}$. In combination
with the current best SDP by Chlamtac
\cite{Chl07}, this leads them to an overall coloring bound of $\tO(n^{0.19996})$. 
In principle, \cite{KT17} could prove
\req{comb} for even larger $i$, but then
the minimum degree would also have to be larger, and then the balance with SDP would lead to worse overall bounds.
\paragraph{Our result}
For $i\rightarrow \infty$, the sequence \req{comb} approaches $\tO((n/\ds)^{1/2})$. In this paper, we get arbitrarily close to this limit assuming only that $\Delta\geq\sqrt n$. More precisely, our main 
combinatorial result is:
\begin{theorem}\label{thm:main}
In polynomial time, for any 3-colorable graph with $n$ vertices with minimum degree $\ds>n^{0.5}$,
we can make progress towards a $k$-coloring for some $k=n^{o(1)} (n/\ds)^{1/2}$  
\end{theorem}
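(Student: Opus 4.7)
The strategy is to iterate the combinatorial progress primitive behind \req{comb} to super-constant depth. Since $i/(2i-1) \to 1/2$ as $i \to \infty$, if I can execute the $i$-th level construction in polynomial time with total overhead $n^{o(1)}$ for some $i = \omega(1)$, I immediately obtain the target bound $n^{o(1)}(n/\Delta)^{1/2}$. The hypothesis $\Delta > \sqrt n$ is exactly the threshold at which iteration beyond a fixed constant becomes feasible: \cite{KT17} needed $\Delta \ge n^{0.61674}$ to close the analysis at $i = 12$, whereas the cheaper balance point here allows, in principle, arbitrarily many levels.

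I would proceed in three steps. First, preprocess the graph via the \regul\ routine of \cite{KT17} so that it becomes essentially degree-regular, and so that any neighborhood restriction to an $\Omega(1)$ fraction of a vertex's neighbors preserves the minimum-degree condition up to constant factors. Second, build a BFS-style hierarchy of neighborhoods $L_0 = \{v\} \subseteq L_1 \subseteq \cdots \subseteq L_i$ rooted at a vertex $v$ in the largest hypothetical color class, and at each level $j \le i$ apply the \cutcolor\ primitive: it returns either a 2-colorable induced subgraph of $L_j$ large enough to yield an independent set witnessing $(n/\Delta)^{j/(2j-1)}$-coloring progress, or a large monochromatic subset of $L_j$ giving merge progress and letting us recurse on a reduced instance. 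Third, choose $i$ just below the threshold at which the compound loss exceeds $n^{o(1)}$.

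The main obstacle is controlling the cumulative loss across $\omega(1)$ levels. In \cite{KT17} each level costs a polylogarithmic factor, which is tolerable only when $i$ is constant; here I would need per-level overhead $n^{o(1)/i}$, either by strengthening \cutcolor\ to use Arora--Rao--Vazirani-style sparse-cut balance of $2^{O(\sqrt{\log n})}$ rather than polylogarithmic, or by amortising the loss via a potential-function argument over the whole hierarchy. A secondary subtlety is ensuring that the monochromatic/merge alternative yields a reduced instance still satisfying the $\Delta > \sqrt n$ hypothesis; this can be arranged by tracking $n/\Delta$ as the recursion invariant and choosing merge sizes so that the new $n$ and $\Delta$ scale compatibly. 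With these ingredients, iterating to $i = \omega(1)$ and combining with the standard neighborhood 2-coloring at termination yields progress toward $n^{o(1)}(n/\Delta)^{1/2}$-coloring, as required.
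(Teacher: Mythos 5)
Your high-level strategy (iterate the combinatorial primitive to super-constant depth so that $i/(2i-1)\to 1/2$, using $\ds>\sqrt n$ as the threshold enabling this) matches the paper's, but you misdiagnose the obstacle and therefore point at fixes that would not work. The bottleneck that stops \cite{KT17} from going to large $i$ is not the per-level polylogarithmic loss. The paper's own proof still loses a polylog per round (e.g.\ $\ds_j\geq\ds_1/(60\log n)^{j-1}$ in \req{eq:ds-ind}), and this is fine: it runs for only $\lfloor\log\log n\rfloor$ rounds, so the accumulated loss is $(\log n)^{O(\log\log n)}=2^{O((\log\log n)^2)}=n^{o(1)}$. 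You do not need per-level overhead $n^{o(1)/i}$, and neither Arora--Rao--Vazirani-style balance nor a potential-function amortisation is used or needed. The real bottleneck is the \emph{rate at which $|T_j|$ shrinks}. The sparse-cut analysis of \cite{KT17} alone only yields $|Y_j|=O(|T_j|(\nh/\ds_j)^2)$, i.e.\ subpolynomial shrinkage per round under the parameter regime \req{eq:delta-nh}, which forces far more than $\log\log n$ rounds and hence a super-polynomial accumulation of polylog losses.

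The missing ingredient is the new family of \emph{side cuts}: for each vertex $u\in Y$ with $d_{S_j\setminus X}(u)\geq\dt_j/3$, the cut $(X'(u),Y'(u))=(N_{S_j}(u)\setminus X,\,N_{T_j}(X'(u))\setminus Y)$, which lives disjointly from the final sparse cut $(X,Y)$. By always taking whichever of the best side cut or sparse cut has the smaller $Y$-part, and balancing the lower bound of Lemma~\ref{lem:lem31} against the upper bound of Lemma~\ref{lem:lem3} via the parameter $\mu=|Y_j|\nh/\ds_j^2$, the paper obtains the square-root recurrence $|Y_j|\leq\sqrt{30\nh|T_j|}$ (Lemma~\ref{lem:upperYnew21}). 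This recurrence, not any reduction in the per-round log factors, is what makes termination in $\log\log n$ rounds possible. A secondary ingredient you do not mention is the tightening of the $Y$-extension test in $\cutcolor$ to use $N_Y(N_{N(r_0)}(t'))$ rather than $N_Y(N_S(t'))$, which is what makes Lemma~\ref{lem:lem3} go through. Finally, your invariant-tracking concern is handled differently: the recursion is on nested subproblems $(S_j,T_j)$ inside the original graph, with the relevant degree parameter being $\ds_j$ (degrees from $S_j$ to $T_j$), and the global hypothesis $\ds\geq\sqrt n$ enters only when verifying pre-condition \req{eq:dtj} for $j\geq 2$.
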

For the optimal balance
with the best SDP by Chlamtac \cite{Chl07}, we will
set $\ds=n^{0.605073}$ which
is comfortably bigger
than our limit $\ds>\sqrt{n}$.
Thereby we improve the best-known bound of  $\tO(n^{0.19996})$ by Kawarabayashi and Thorup \cite{KT17} to  color any 3-colorable
$n$ vertex graph in polynomial time, as follows.  

\begin{theorem}\label{thm:main1}
In polynomial time, for any 3-colorable graph with $n$ vertices, we can give an $\tO(n^{0.19747})$-coloring. 
\end{theorem}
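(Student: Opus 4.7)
The plan is to apply the new combinatorial bound of Theorem~\ref{thm:main} in combination with the SDP-based algorithm of Chlamtac~\cite{Chl07}, via the standard degree-balancing reduction used in \cite{BK97,ACC06,KT17}. Recall Blum's notion of \emph{progress towards a $k$-coloring}: a polynomial-time procedure that, on any 3-colorable input on $n$ vertices, extracts an independent set of size $\Omega(n/k)$. Iterating this and giving each extracted set a fresh color yields an $\tO(k)$-coloring of the whole graph. As established in prior work, for any chosen threshold $\Delta$ it suffices to design a progress procedure that handles 3-colorable inputs with \emph{either} minimum degree at least $\Delta$ \emph{or} maximum degree at most $\Delta$; a standard peeling argument (iteratively remove vertices of degree $<\Delta$) reduces the general case to these two, since the surviving core has min degree at least $\Delta$ and the peeled part has induced max degree at most $\Delta$, and one of the two must contain at least $n/2$ vertices.

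In the high-minimum-degree case, Theorem~\ref{thm:main} applies whenever $\Delta>\sqrt n$ and yields progress towards a $k_1$-coloring with $k_1=n^{o(1)}(n/\Delta)^{1/2}$. In the low-maximum-degree case, I would invoke Chlamtac's SDP algorithm~\cite{Chl07} as a black box, which colors (and in particular makes progress towards a $k_2$-coloring of) any 3-colorable graph of max degree at most $\Delta$ with $k_2=\tO(\Delta^{\beta})$ colors, where $\beta$ is the max-degree exponent implicit in~\cite{Chl07}.

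To finish, choose $\Delta=n^{\alpha}$ and balance $k_1$ and $k_2$ by solving $(1-\alpha)/2=\alpha\beta$. With Chlamtac's exponent this gives $\alpha=0.605073$, i.e.\ $\Delta=n^{0.605073}$. Since $0.605073>1/2$, the hypothesis $\Delta>\sqrt n$ of Theorem~\ref{thm:main} is satisfied with room to spare, and both $k_1$ and $k_2$ evaluate to $\tO(n^{0.19747})$. Iterating the resulting progress procedure colors the whole graph with $\tO(n^{0.19747})$ colors. The only substantive ingredient here is Theorem~\ref{thm:main}; the peeling reduction and the black-box appeal to Chlamtac's SDP are by now standard, so the only genuine thing to verify is the arithmetic of the balance. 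Since Theorem~\ref{thm:main} already pushes the combinatorial exponent to its asymptotic limit $1/2$, no further improvement via this framework is possible from the combinatorial side — the remaining slack lies entirely with future improvements to the SDP exponent $\beta$.
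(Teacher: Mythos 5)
Your proposal is correct and follows essentially the same route as the paper: invoke Lemma~\ref{lem:sdp} (the degree-dichotomy reduction, which you sketch via peeling), use Chlamtac's SDP (Theorem~\ref{thm:eden}) for the low-max-degree side and Theorem~\ref{thm:main} for the high-min-degree side, and balance at $\Delta=n^{0.605073}$. The only minor imprecision is describing Chlamtac's result as a clean $\tO(\Delta^\beta)$ bound with a single fixed exponent $\beta$: in Theorem~\ref{thm:eden} the exponent is $1/(3+3c)$ where the admissible $c$ is itself coupled to $\tau=\log_n\Delta$ via the $\lambda_{c,\tau}$ condition, so the ``balance equation'' is not literally $(1-\alpha)/2=\alpha\beta$ for a free $\beta$; the paper fixes $(\tau,c)=(0.605073,\,0.0213754)$ and checks that both sides land at $\tO(n^{0.19747})$, which is what your arithmetic implicitly does as well.
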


To appreciate the simple
degree bound $\ds>n^{0.5}$ from Theorem \ref{thm:main},
we state here the corresponding result from
\cite{KT17}:
\begin{theorem}[{\cite[Theorem 23]{KT17}}]
Consider a 3-colorable graph on $n$ vertices with all degrees above $\ds$ where
$\ds=n^{1-\Omega(1)}$.
Suppose for some integer $c=O(1)$ that
$
k=\tomega\left((n/\ds)^{\frac{2c+2}{4c+3}}\right)$
and for all $j=1,...,c-1$,
\begin{equation*}\label{eq:degree17}
(\ds/k)(\ds k/n)^j(\ds k^2/n)^{j(j+1)}=
\ds^{j^2+2j+1}k^{\,2j^2+3j-1}/n^{\,j^2+2j}=
\tomega(1). \end{equation*}
Then we can make progress
towards $\tO(k)$ coloring 
in polynomial time.
\end{theorem}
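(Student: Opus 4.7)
The plan is to follow Blum's\tcite{Blum94} framework of iteratively exploring neighborhoods in a 3-colorable graph, extended to depth $c$ in the fashion of \cite{KT12}. Fix a target independent set size $s=\Theta(n/k)$; since progress towards $k$-coloring reduces to producing such a set (or a useful cut that yields one), it suffices to exhibit one. After a \regul\ preprocessing step that either already makes progress or reduces to a near-regular instance of degree $\Theta(\ds)$, I would run \mainloop\ for $O(\log n)$ rounds, each round invoking an inner procedure that descends to depth $c$ from a carefully chosen root.

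At depth $j$, the inner procedure maintains a sample of $j{+}1$ vertices $v_0,v_1,\ldots,v_j$ with each consecutive pair adjacent and all color-consistency constraints from 3-colorability respected. Since each $N(v)$ is bipartite, the coloring along the sampled walk is determined up to a global permutation of the three colors, and the algorithm restricts attention to walks whose endpoints land in a preselected target color class of size $\Theta(n/k)$. The size of this configuration space is governed by the product of a per-step selection factor $\ds k/n$ (the chance that a random neighbor lies in the target class) and a pairwise-consistency factor $\ds k^2/n$ contributed by each of the $j(j{+}1)$ ordered pairs among the sampled vertices; combined with the initial factor $\ds/k$, this matches exactly the quantity required to be $\tomega(1)$ in the hypothesis.

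Once depth $c$ is reached, a \cutcolor\ call decides the outcome: either a sufficiently large fraction of the depth-$c$ vertices lands on one side of the associated bipartition, producing a \monochromatic\ independent set of size $\Theta(n/k)$, or \sidecut\ identifies a sparse cut whose smaller side can be recursively $\tO(k)$-colored and whose $\tO(k)$ cross-edges contribute only $\tO(k)$ additional colors. The outer bound $k=\tomega((n/\ds)^{(2c+2)/(4c+3)})$ arises from balancing the depth-$c$ expected count against the number of distinct monochromatic configurations that can coexist within a single neighborhood; together with the intermediate conditions for $j<c$, it certifies that this dichotomy actually fires in each round.

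The main obstacle is the joint tightness of the $c$ hypotheses. Each depth contributes polylogarithmic losses and potential overlap with adjacent levels, so the conditions at different $j$ cannot be analyzed independently: the sampling distribution must be engineered so that the depth-$c$ expectation stays $\tomega(1)$ \emph{simultaneously} with all conditions for $j<c$, while avoiding double-counting of configurations shared between consecutive levels. Converting the resulting expectation bound into an actual algorithmic certificate—rather than only a statistical guarantee—requires a careful concentration or partial-derandomization argument, and this joint analysis is the delicate combinatorial core of the proof.
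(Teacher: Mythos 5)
This statement is quoted verbatim from~\cite{KT17} (their Theorem~23) and is \emph{not} proved in the present paper; it appears only as a contrast to Theorem~\ref{thm:main}. The relevant proof machinery, however, is precisely the outer-loop/inner-loop recursion of Section~\ref{sec:outer-loop} with the weaker bound~\req{eq:old-Y} in place of~\req{eq:upperYnew21}, so it is fair to judge your sketch against that.

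Your proposal describes a fundamentally different object than what the algorithm actually maintains, and the mismatch is not cosmetic. You posit a ``walk'' $v_0,\ldots,v_j$ of $j+1$ pairwise-adjacent sampled vertices, a ``configuration space'' counted by per-step selection factors, and an expectation-versus-concentration argument. None of this matches the [KT17] algorithm. What is actually maintained is the two-level neighborhood structure $H_j=(r_0,S_j,T_j)$: a single fixed root $r_0$, a subset $S_j\subseteq N(r_0)$, and a second neighborhood $T_j$ with near-regular degrees $\dt_j$ from $T_j$ back to $S_j$. The integer $c$ is not a depth of neighborhood exploration; it is the number of rounds of the \emph{outer loop}. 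Each round runs an inner loop of nested \cutcolor\ calls that either certifies $S$ monochromatic (Type~0 progress), makes progress directly via Lemma~\ref{lem:blum}, or returns a sparse cut $(X,Y)$; the round ends by \regul\ to get $(S_{j+1},T_{j+1},\ds_{j+1},\dt_{j+1})$. The quantities $\ds k/n$ and $\ds k^2/n=\ds/\nh$ do appear, but not as ``per-step selection probabilities'': $\ds/\nh$ is exactly the slack in pre-condition~\req{eq:dsj}, and $\ds^2 k/n$ is the lower bound on $\dt_1$, so the displayed product for index $j$ is what falls out of iterating~\req{eq:old-Y}-style shrinkage of $|T_j|$ together with the regularization loss of Lemma~\ref{lem:regul}, and is exactly what is needed to re-establish pre-condition~\req{eq:dtj} for round $j+1$. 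The terminal bound $k=\tomega((n/\ds)^{(2c+2)/(4c+3)})$ is what forces $|Y_c|<\ds_c^2/(8\nh)$ (cf.~\req{eq:lowerY}), which contradicts the existence of a further sparse cut and hence forces progress by round~$c$.

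Concretely, the gap is this: there is no sampling, no expectation, no concentration, and no derandomization anywhere in the argument, so the ``delicate combinatorial core'' you flag at the end is a problem you created rather than one that exists in the proof. The actual difficulty is entirely deterministic bookkeeping: proving by induction on $j$ that the degree pre-conditions~\req{eq:dsj} and~\req{eq:dtj} survive $c$ rounds of shrinkage and regularization, which is where the nested conditions on $\ds^{j^2+2j+1}k^{2j^2+3j-1}/n^{j^2+2j}$ come from. If you redo the sketch, start from the structure in Section~\ref{sec:outer-loop} of this paper, replace~\req{eq:upperYnew21} by~\req{eq:old-Y}, and derive what the inductive bounds on $|T_j|$, $\ds_j$, $\dt_j$ become; the hypotheses of the theorem should then drop out as the conditions needed for each round to be good.
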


We note that ${\frac{2c+2}{4c+3}}={\frac{i}{2i-1}}$ for $i=2c+1$, so the coloring bounds from \cite{KT17} still follow
the pattern from \req{comb}. However, the
degree constraint from \cite{KT17} 
is thus both
more complicated and more
restrictive than our $\ds\geq\sqrt n$, limiting
the balancing with SDP. Our Theorem \ref{thm:main} thus provides a both stronger and more appealing understanding.

\paragraph{Techniques}
Our coloring algorithm follows
the same general pattern 
as that in \cite{KT17}, which
recurses through a sequence of nested cuts, called ``sparse cuts", until it finds progress.
Here we go through the same recursion, but in addition to
the sparse cuts, we identify a family of alternative  ``side cuts".
Having the choice between
the side cuts and the sparse cuts is what leads us to
the nicer and stronger bounds from Theorem \ref{thm:main}. To describe our new side cuts, we first have to review the algorithm from \cite{KT17}.

\drop{

Our goal is to improve the overall coloring bound in terms of
$n$, and we will indeed get down to $o(n^{1/5})$ colors. To do so, we
specifically target the connection between combinatorial and
SDP approaches. Using our new combinatorial algorithm
as a subroutine, we present a novel recursion that gets us down to
$\tO((n/\ds)^{12/23})$ colors, but only for the large values
of $\ds$ needed for an optimal combination with
SDP. In combination with Chlamtac's SDP \cite{Chl07}, we get a polynomial
time algorithm that colors any 3-colorable graph on $n$ vertices with
$O(n^{0.19996})$ colors.

We note that for smaller values of $\Delta$, our new recursion does
not offer any improvement over our new combinatorial bound
$\tO((n/\ds)^{4/7})$. Instead of adding another independent dot, the
new recursion connects the dots, improving the combinatorial side only
in the parameter range of relevance for combination with SDP.

We specifically target the worst-case for SDP: high degrees.
}

\section{Preliminaries}\label{sec:blum}

In this section, we provide the notations needed in this paper. They are actually the same as those used in \cite{KT17} (and indeed in \cite{Blum94}), but for completeness, we give here all necessary notations.  

We hide $\log n$ factors, so we
use the notation that $\tO(x)\leq
x\log^{O(1)} (n)$,
$\tOmega(x)\geq x/\log^{O(1)} (n)$, $\tto(x)\leq x/\log^{\omega(1)} (n)$,
and $\tomega(x)\geq x\log^{\omega(1)} (n)$.

We are given a 3-colorable graph
$G=(V,E)$ with $|V|=n=\omega(1)$ vertices.  The (unknown) 3-colorings are 
red, green, and blue.  For a vertex $v$, we let $N(v)$ denote its set
of neighbors.  For a vertex set $X \subseteq V$, let
$N(X)=\bigcup_{v\in X}N(v)$ be the neighborhood of $X$. If $Y$ is a
vertex set, we use $N_Y$ to denote neighbors in $Y$, so
$N_Y(v)=N(v)\cap Y$ and $N_Y(X)=N(X)\cap Y$.
We let $d_Y(v)=|N(v)\cap
Y|$ and $d_Y(X)=\{d_Y(v)\,|\,v\in X\}$.  Then $\min d_Y(X)$, $\max
d_Y(X)$, and $\avg d_Y(X)$, denote the minimum, maximum, and average
degree from $X$ to $Y$.

For some color target $k$ depending on $n$, we wish to find
an $\tO(k)$ coloring of $G$ in polynomial time.  We reuse several ideas and techniques from
Blum's approach\tcite{Blum94}.
\paragraph*{Progress}
Blum has a general notion of {\em progress towards an $\tO(k)$ coloring}
(or {\em progress} for short if $k$ is understood). The basic
idea is that such progress eventually leads to a full $\tO(k)$ coloring
of a graph. Blum presents three types of progress towards $\tO(k)$ coloring:
\begin{description}
\item[Type 0: Same color.] Finding vertices $u$ and $v$ that have
the same color in every  3-coloring.

\item[Type 1: Large independent set.] Finding an independent or 2-colorable vertex set $X$ of size $\tOmega(n/k)$.

\item[Type 2: Small neighborhood.] Finding a non-empty independent
or 2-colorable vertex set $X$ such that $|N(X)|=\tO(k|X|)$.
\end{description}
In order to get from progress to actual coloring, we want
$k$ to be bounded by a
{\em near-polynomial\/} function $f$ of $n$ where
near-polynomial means that $f$ is non-decreasing and that there are
constants $c,c'>1$ such that $c f(n)\leq f(2n)\leq c'
f(n)$ for all $n$. As described in \cite{Blum94}, this includes any function of
the form $f(n)=n^\alpha \log^\beta n$ for constants $\alpha>0$ and
$\beta$.

\begin{lemma}[{\cite[Lemma 1]{Blum94}}]\label{lem:progress} Let $f$ be near-polynomial. If we in time polynomial in $n$
can make progress towards an $\tO(f(n))$  coloring of either Type 0, 1, or 2, on any 3-colorable graph on $n$ vertices,
then in time polynomial in $n$, we can $\tO(f(n))$ color any 3-colorable graph on $n$ vertices.
\end{lemma}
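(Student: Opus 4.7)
The plan is an induction on $n$: I would build a recursive coloring procedure $C(G)$ that invokes the given progress algorithm once on $G$ with target $k = f(n)$ and then recurses on strictly smaller instances. The base case $n = O(1)$ is handled by a trivial coloring.

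In the inductive step I would branch on the type of progress returned. For Type 0, I would contract the monochromatic pair $u,v$ and recurse on the resulting $(n-1)$-vertex 3-colorable graph; the extension back to $G$ uses no new colors, and $f(n-1) = \tO(f(n))$ by near-polynomiality absorbs the single step. For Type 1, I would assign one or two fresh colors to the large independent or 2-colorable set $X$ and recurse on $G \setminus X$. Since $|X| = \tOmega(n/f(n))$, after $\tO(f(n))$ Type 1 rounds the residual has at most $n/2$ vertices, and the near-polynomial bound $f(n/2) \le f(n)/c$ for some $c > 1$ makes the total color count telescope into a geometric series summing to $\tO(f(n))$. For Type 2, I would use one or two fresh colors on $X$, recursively color the boundary $N(X)$ (a 3-colorable graph on at most $\tO(f(n)|X|) < n$ vertices, to which the inductive hypothesis applies), and then continue the outer loop on $G \setminus (X \cup N(X))$.

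The main obstacle is the accounting in the Type 2 case. Summing $\tO(f(|N(X_j)|))$ over all Type 2 pockets $X_j$ produced along the recursion does not obviously yield $\tO(f(n))$, because $f(x) = x^{\alpha}$ with $\alpha<1$ is subadditive in a way that lets many small pockets overshoot $f(n)$ in aggregate. The planned resolution, in the spirit of \cite{Blum94}, is to show that iterating the progress algorithm inside the pocket $G[X \cup N(X)]$ either turns up a Type 1 set of size proportional to $|X \cup N(X)|$ (so its cost is absorbed by the already-settled Type 1 bookkeeping) or strictly shrinks the pocket, so the recursion on pockets terminates within a controlled number of rounds. Once this amortization is established, the total colors spent on pockets also sum to $\tO(f(n))$ and the induction closes.
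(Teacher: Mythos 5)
Your handling of Type~0 and Type~1 progress matches Blum's argument and is correct: contraction is free of new colors, and the geometric sum over halving stages using near-polynomiality closes the Type~1 accounting. The gap is exactly where you flagged it, in Type~2, and your "planned resolution'' would not work. Running the progress algorithm inside a pocket $G[X\cup N(X)]$ and finding a Type~1 set there only yields a set of size $\tOmega(|P|/f(|P|))$, which is far from "proportional to $|P|$'' when $f$ is polynomial; and "strictly shrinks the pocket'' gives no useful bound on the number of rounds. More fundamentally, if you color each pocket's boundary $N(X_j)$ with its own fresh palette, you cannot in general reuse colors across pockets: although the cores $X_j$ and $X_{j'}$ are non-adjacent after removal, the boundaries $N(X_j)$ and $N(X_{j'})$ \emph{can} be adjacent to each other, so the palettes must be disjoint, and $\sum_j \tO(f(|N(X_j)|))$ can overshoot $f(n)$ by a polynomial factor.

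Blum's actual proof does not recursively color $N(X)$ as a separate pocket at all. Instead, Type~2 progress is shown to \emph{reduce to} Type~1 progress: repeatedly apply the progress subroutine, and whenever it returns a Type~2 set $X_i$, remove $X_i\cup N(X_i)$ and continue on the residual graph, accumulating the cores $X_1,X_2,\dots$. Because $N(X_i)$ is removed along with $X_i$, the cores are pairwise non-adjacent, so $\bigcup_i X_i$ remains 2-colorable with a single pair of colors. The process stops when the residual graph is exhausted, or when a Type~0 or Type~1 event occurs (handled directly). In the exhaustion case, $n\le\sum_i(|X_i|+|N(X_i)|)\le\tO(k)\sum_i|X_i|$ using $|N(X_i)|=\tO(k|X_i|)$, so $|\bigcup_i X_i|=\tOmega(n/k)$. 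That is precisely Type~1 progress: color the accumulated union with two fresh colors and recurse on the remainder, and only then does the already-settled Type~1 geometric accounting apply. The essential idea you are missing is this accumulation of Type~2 cores into a single Type~1 set; once you have it, the lemma follows by the same induction you sketched, with Type~2 never appearing as an independent case in the color count.
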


The general strategy is to identify a small
parameter $k$ for which we can guarantee progress. To apply
Lemma \ref{lem:progress} and get a coloring, we need a bound $f$ on $k$
where $f$ is near-polynomial in $n$. As
soon as we find one progress of the above types, we are done, so
generally, whenever we see a condition that implies progress, we
assume that the condition is not satisfied.

Our focus is to find a vertex set $X$, $|X|>1$, that
is guaranteed to be monochromatic in every 3-coloring. This will
happen assuming that we do not make other progress on the way. When we
have the vertex set $X$, we get same-color progress for any pair of
vertices in $X$.  We refer to this as {\em monochromatic
  progress}.

Most of our progress will be made
via the results of Blum presented below using a common parameter
\begin{equation}\label{eq:nh}
\nh = n/k^2.
\end{equation}
A very useful tool we get from Blum is the
following multichromatic (more than one color) test:
\begin{lemma}[{\cite[Corollary 4]{Blum94}}] \label{lem:blum} Given a vertex set $X\subseteq V$ of
size at least $\nh = n/k^2$, in polynomial time, we can either make
progress towards an $\tO(k)$-coloring of $G$, or else guarantee that
under {\em every} legal 3-coloring of $G$, the set $X$ is
multichromatic.
\end{lemma}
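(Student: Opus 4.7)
The plan is to decide $X$'s status via a short case analysis driven by a single structural observation plus two polynomial-time tests. The observation is the contrapositive of ``if $X$ is monochromatic under some legal 3-coloring, then the remaining two colors suffice for $N(X)$, so $N(X)$ is 2-colorable'': if the subgraph induced on $N(X)$ is not bipartite, then $X$ cannot be monochromatic under any legal 3-coloring of $G$. The two tests are (a) does $X$ span an edge and (b) is the induced subgraph on $N(X)$ bipartite; both run in polynomial time.

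If $X$ spans an edge $uv$, then $u$ and $v$ receive distinct colors in every legal 3-coloring, so $X$ is multichromatic and we are done. Otherwise $X$ is independent. If in addition $|X|\geq n/k$, then $X$ itself is an independent set of size $\tOmega(n/k)$ and we output Type 1 progress. So we may assume $n/k^2 \leq |X| < n/k$. Compute $N(X)$ and test bipartiteness of the induced subgraph; if the test fails, the observation above lets us output ``$X$ is multichromatic''. Otherwise $N(X)$ is 2-colorable, and we branch on its size. If $|N(X)|\geq n/k$, then $N(X)$ is a 2-colorable set of size $\tOmega(n/k)$, yielding Type 1 progress. If $|N(X)| < n/k$, then since $|X|\geq n/k^2$ we have $k|X|\geq n/k > |N(X)|$, so $X$ is a non-empty independent set with $|N(X)|\leq k|X|$, giving Type 2 progress.

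There is no real obstacle here — every step is elementary (an edge scan, a BFS-based bipartiteness check, a size comparison). The one thing that must be verified is that the threshold $\nh = n/k^2$ is exactly what makes the case analysis close: it ensures $k|X|\geq n/k$, which is the inequality that aligns the Type 1 threshold on $|N(X)|$ with the Type 2 bound $|N(X)|\leq k|X|$. With a smaller $|X|$ there would be a gap between the regime where $N(X)$ is large enough for Type 1 progress and the regime where $N(X)$ is small enough for Type 2 progress on $X$, so the choice $\nh = n/k^2$ is exactly tight for this argument.
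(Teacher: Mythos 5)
Your reconstruction is correct, and it captures the standard argument behind Blum's multichromatic test (the paper cites \cite[Corollary 4]{Blum94} without reproving it): the key dichotomy is that if $X$ is independent and monochromatic in some 3-coloring then $N(X)$ must be 2-colorable, after which the size thresholds for Type 1 and Type 2 progress mesh exactly because $|X|\geq n/k^2$ gives $k|X|\geq n/k$. Your closing remark about why $\nh=n/k^2$ is the tight threshold for closing the case analysis is also exactly the right way to see where the definition of $\nh$ comes from.
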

The following lemma is implicit in \cite{Blum94} and explicit
in \cite{KT17}.
\begin{lemma}[{\cite[Lemma 6]{KT17}}]\label{lem:large-neighborhoods}
If the vertices in a set $Z$ on average have $d$ neighbors in $U$, then
the whole set $Z$ has at least $\min\{d/\nh,|Z|\}\,d/2$
distinct neighbors in $U$ (otherwise some progress is made).
\end{lemma}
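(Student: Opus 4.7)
The plan is to grow $N_U(Z)$ by a greedy selection from $Z$. I would maintain a subset $R\subseteq Z$, initialized empty, and while there is some $v\in Z\setminus R$ with $|N_U(v)\setminus N_U(R)|\geq d/2$ and $|R|<\min\{d/\nh,|Z|\}$, add such a $v$ to $R$. Each iteration contributes at least $d/2$ previously unseen vertices to $N_U(R)$, so at termination $|N_U(R)|\geq |R|\cdot d/2$. If the loop halts because $|R|$ reaches $\min\{d/\nh,|Z|\}$, the lemma follows immediately from $|N_U(Z)|\geq |N_U(R)|\geq\min\{d/\nh,|Z|\}\cdot d/2$.

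The interesting case is when the greedy stops early because no $v\in Z\setminus R$ contributes $d/2$ fresh neighbors. In that situation every $v\in Z$ satisfies $|N_U(v)\cap N_U(R)|>d_U(v)-d/2$, and summing over $v\in Z$ gives the edge bound $e(Z,N_U(R))>d|Z|-|Z|d/2=d|Z|/2$. By averaging, some $u\in N_U(R)$ must have $|N_Z(u)|\geq d|Z|/(2|N_U(R)|)$, and a direct calculation shows that whenever $|N_U(R)|$ is smaller than the putative bound $\min\{d/\nh,|Z|\}\cdot d/2$, this maximum neighborhood has size at least $\nh$. I would then apply Lemma~\ref{lem:blum} to $X=N_Z(u)$: either the multichromatic test directly yields progress towards an $\tO(k)$-coloring (and we are done), or else it guarantees that $X$ is multichromatic in every legal 3-coloring. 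In the latter case $X\subseteq N(u)$ uses at most two colors, so $X$ is actually bipartite; combined with the guaranteed multichromaticity, this supplies a 2-colorable subset of $Z$ of size at least $\nh$ on which Blum's standard extraction routines deliver one of the three progress types.

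The main obstacle I expect is unifying the two regimes of $\min\{d/\nh,|Z|\}$ under a single greedy analysis. When $d/\nh\leq|Z|$, the greedy terminates cleanly at $|R|=d/\nh$ and the bound $d^2/(2\nh)$ drops out; when $|Z|<d/\nh$ the greedy may exhaust $Z$, giving $|N_U(Z)|\geq |Z|d/2$ outright, but it may also stop earlier, and in that subcase the double-counting step must be tight enough that the vertex $u$ produced by pigeonhole genuinely satisfies $|N_Z(u)|\geq \nh$ so that Lemma~\ref{lem:blum} is applicable. Making the comparison between $d|Z|/(2|N_U(R)|)$ and $\nh$ precise, without losing constants and without introducing auxiliary hypotheses on the relative sizes of $d$, $|Z|$ and $\nh$, is where I anticipate the most care will be needed.
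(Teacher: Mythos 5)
Your greedy setup is reasonable, but the two key steps you wave at do not actually go through. First, the ``direct calculation'' claim fails: when the greedy stops early and $|N_U(R)|<\min\{d/\nh,|Z|\}\,d/2$, the averaging bound gives only
$|N_Z(u)|\;>\;\frac{d|Z|/2}{\min\{d/\nh,|Z|\}\,d/2}\;=\;\frac{|Z|}{\min\{d/\nh,|Z|\}},$
which is $>1$ in the regime $|Z|\le d/\nh$ and equals $|Z|\nh/d$ otherwise. This is at least $\nh$ only if $|Z|\ge d$, which is not a hypothesis of the lemma (and fails in the paper's own application, where $Z=X'(u)$ and $d\approx\ds_j/2$ can exceed $|Z|$). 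So you cannot conclude $|N_Z(u)|\ge\nh$ from early termination. Second, even granting such a $u$, a ``2-colorable multichromatic subset of $Z$ of size $\nh=n/k^2$'' is not, by itself, progress of any of Blum's three types: Type~1 needs size $\tOmega(n/k)$, Type~2 needs a small outward neighborhood, and Type~0 needs two vertices forced to share a color. The multichromatic outcome of Lemma~\ref{lem:blum} is a \emph{guarantee}, not a progress event, so the appeal to ``Blum's standard extraction routines'' is doing unsupported work.

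The missing idea is to extract a \emph{pair}, not a single vertex. If some $v,v'\in Z$ satisfy $|N_U(v)\cap N_U(v')|\ge\nh$, apply Lemma~\ref{lem:blum} to $X=N_U(v)\cap N_U(v')$. Either the test makes progress, or $X$ is multichromatic in every 3-coloring; but $X\subseteq N(v)\cap N(v')$ avoids both $c(v)$ and $c(v')$, so using two colors forces $c(v)=c(v')$ in every 3-coloring --- Type~0 progress. If no such pair exists, take the $m=\min\{\lfloor d/\nh\rfloor,|Z|\}$ vertices of $Z$ with largest $U$-degree (their average $U$-degree is still $\ge d$) and apply the Bonferroni bound: $|N_U(Z)|\ge md-\binom{m}{2}\nh\ge m\bigl(d-m\nh/2\bigr)\ge md/2=\min\{d/\nh,|Z|\}\,d/2$ up to rounding. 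This is the style of argument the cited Lemma~6 of \cite{KT17} relies on; your fresh-neighbor greedy by itself cannot reach $\nh$.
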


\paragraph*{Large minimum degree}
Our algorithms will exploit a lower bound $\Delta$ on the minimum degree in the
graph.
It is easily seen that if a vertex $v$ has
$d$ neighbors, then we can make progress towards $\tO(d)$ coloring since
this is a small neighborhood for Type 2 progress. For our color target $k$,
we may therefore assume:
\begin{equation}\label{eq:min-degree}
k\leq\Delta/\log^a n\textnormal{ for any constant $a$}.
\end{equation}
However, combined with semi-definite programming (SDP) as in \cite{BK97}, we can assume a much larger
minimum degree. The combination is captured by the following
lemma, which is proved  in \cite{KT17}:
\begin{lemma}[{\cite[Proposition 17]{KT17}}] \label{lem:sdp}
Suppose for some near-polynomial functions $d$ and $f$, that for any $n$, we can make progress towards an $\tO(f(n))$ coloring for
\begin{itemize}
\item any 3-colorable graph on $n$ vertices with minimum degree $\geq d(n)$.
\item any 3-colorable graph on $n$ vertices  with maximum degree $\leq d(n)$.
\end{itemize}
Then we can make progress towards $\tO(f(n))$-coloring on any
3-colorable graph on $n$ vertices.
\end{lemma}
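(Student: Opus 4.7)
The plan is to combine the two hypotheses via a degree-based decomposition of $G$: partition the vertex set into a ``dense'' part on which the high-min-degree algorithm applies and a ``sparse'' part on which the low-max-degree algorithm applies, and argue that one of them is large enough that the resulting progress lifts back to $G$. Concretely, I would run the standard $d(n)$-core peeling on $G$, iteratively removing any vertex whose current degree drops below $d(n)$; let $C$ be the resulting core (with minimum degree $\ge d(n)$ in $G[C]$) and $S=V\setminus C$ the set of removed vertices, which by construction induces a $d(n)$-degenerate subgraph $G[S]$.

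If $|C|\ge n/2$, I would apply the high-min-degree hypothesis to $G[C]$: since $d$ is non-decreasing, $d(|C|)\le d(n)$, so the hypothesis is valid and yields progress towards an $\tO(f(|C|))$-coloring of $G[C]$, which is $\tO(f(n))$ because $f$ is near-polynomial and $|C|=\Theta(n)$. Type 0 progress on $G[C]$ lifts directly to $G$ because every 3-coloring of $G$ restricts to one of $G[C]$; Type 1 progress lifts because an independent set of size $\tOmega(|C|/f(|C|))$ has size $\tOmega(n/f(n))$ in $G$.

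If instead $|S|>n/2$, I would exploit that $G[S]$ has average degree at most $2d(n)$ (by $d(n)$-degeneracy), so a Markov argument produces $S'\subseteq S$ with $|S'|\ge n/4$ and maximum degree in $G[S']$ at most $4d(n)$. By near-polynomiality of $d$, $4d(n)=O(d(|S'|))$, and absorbing the constant factor (e.g., by rescaling $d$, which preserves the near-polynomial property), the low-max-degree hypothesis applies to $G[S']$ and yields progress that lifts to $G$ by the same size argument.

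The main obstacle I anticipate is lifting Type 2 progress from a subgraph to $G$: a small-neighborhood set $X$ inside $G[C]$ need not have a small neighborhood in $G$, since its vertices may have many neighbors in the peeled residue $S$. The likely resolution is a dichotomy argument: either the Type 2 progress lifts cleanly to $G$, or else many vertices of $X$ must have high residue-degree into $S$, so the 2-colorable neighborhoods they induce in $S$ themselves supply a large independent set and hence Type 1 progress on $G$; making this structural dichotomy quantitative under the near-polynomial regime is the central technical step.
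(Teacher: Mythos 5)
Your decomposition strategy (peel to the $d(n)$-core $C$, split on whether $C$ or the residue is the bigger half, absorb constants via near-polynomiality) is a reasonable skeleton, and you are right that Type~0 and Type~1 progress lift from a half-size induced subgraph. But the proposal is not a proof: you explicitly leave the Type~2 lifting as an unquantified ``central technical step,'' and the dichotomy you sketch for it does not hold up. If $X$ is merely independent or 2-colorable (the only structure Type~2 progress guarantees), there is no reason for $N(X)$ to be 2-colorable or to contain a large independent set: a vertex set that uses two colors can have neighbors of all three colors. The ``2-colorable neighborhoods they induce in $S$'' step would require $X$ to be \emph{monochromatic}, which neither hypothesis gives you. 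So the one case you flagged as problematic is, in fact, not resolved by the argument you propose.

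There is also a quieter gap in the low-degree branch. You obtain $S'\subseteq S$ with $|S'|\ge n/4$ and max degree at most $4d(n)$ in $G[S']$, and then invoke the max-degree hypothesis. But that hypothesis requires max degree $\le d(|S'|)$, and since $d$ is non-decreasing and $|S'|\le n$, you have $d(|S'|)\le d(n) < 4d(n)$ — the inequality points the wrong way. Your fix, ``rescale $d$,'' is not available: shrinking $d$ by a constant \emph{weakens} the max-degree precondition you need to satisfy, but it simultaneously \emph{strengthens} the min-degree hypothesis (a smaller threshold means more graphs must be handled), so you cannot assume both hypotheses for the rescaled $d$. Both of these points — the monochromaticity needed for the Type~2 fallback and the one-sided effect of rescaling $d$ on the two hypotheses — would need genuinely new arguments before this could be called a proof; as written it is a plausible plan with the hard step acknowledged but not carried out.
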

Using the SDP from \cite{KMS98}, we can make progress
towards $d(n)^{1/3}$ for graphs with degree below $d(n)$,
so by Lemma \ref{lem:sdp}, we
may assume
\begin{equation}\label{eq:min-degree-sdp}
k\leq\ds^{1/3}.
\end{equation}
We can do even better using the strongest SDP result of Chlamtac
from \cite{Chl07}:
\begin{theorem}[{\cite[Theorem 15]{Chl07}}]\label{thm:eden}
For any $\tau>\frac{6}{11}$ there is a $c>0$ such that
there is a polynomial time algorithm that for any 3-colorable graph
$G$ with $n$ vertices and all degrees below $\Delta = n^\tau$ finds an
independent set of size $\tOmega\left(n/\Delta^{1/(3+3c)}\right)$.
Hence we can make Type 1 progress towards an $\tO\left(\Delta^{1/(3+3c)}\right)=
\tO\left(n^{\tau/(3+3c)}\right)$-coloring.

The requirement on $\tau$ and
$c$ is that $c< 1/2$ and $\lambda_{c,\tau}(\alpha)
= 7/3 + c +\alpha^2/(1-\alpha^2) - (1+c)/\tau - (\sqrt{(1+\alpha)/2}
+\sqrt{c(1-\alpha)/2})^2$ is positive for all $\alpha\in
[0,\frac{c}{1+c}]$.
\end{theorem}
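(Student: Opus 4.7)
The plan is to prove Chlamtac's theorem by extending the Karger-Motwani-Sudan (KMS) vector coloring framework using the Lasserre hierarchy of SDP relaxations. First, I would solve the level-$r$ Lasserre SDP relaxation of 3-coloring, where $r$ is a small constant depending on $c$. From the optimal solution I extract, for each vertex $u$ and each color class $i\in\{1,2,3\}$, a unit vector $v_{u,i}$, together with the higher-order vectors that enforce local consistency: for any constant-sized set $S$ of vertices with a choice of colors, the corresponding multilinear inner product equals the probability (under some consistent local distribution) that $S$ receives that coloring. Edges give the hard constraint $\langle v_{u,i},v_{v,i}\rangle=0$ for all $i$, and each vertex satisfies $\sum_i \|v_{u,i}\|^2=1$.

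Next I would do a hyperplane-style rounding to construct the independent set. Following the KMS template, pick a single Gaussian vector $g$, but rather than taking vertices whose red vector $v_{u,\text{red}}$ exceeds a threshold proportional to $\sqrt{\log\Delta}$, use a threshold shaped by a parameter $\alpha\in[0,c/(1+c)]$ that will be optimized. The candidate set $X$ consists of vertices whose red projection $\langle g,v_{u,\text{red}}\rangle$ crosses this threshold. The expected size of $X$ is governed by a standard Gaussian tail bound, giving the first two terms of $\lambda_{c,\tau}$. To eliminate edges inside $X$, I would then condition on a small random seed set $R$ of $O(c)$ vertices chosen to be red in a sample from the Lasserre-induced local distribution; the key Lasserre property is that conditioning preserves $O(c)$-local consistency and shrinks the survival probability of an edge inside $X$ by an amount controlled by the inner products of $v_{u,\text{red}}$ with the red vectors of $R$. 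Surviving edges are killed by deleting one endpoint per edge, yielding an independent set.

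The analysis then balances three quantities, all expressed on the $\log_n$ scale: the expected size of $X$ (a function of $\alpha$ and the tail threshold), the fraction of $X$ killed by edge cleanup (bounded by $\Delta$ times the conditional edge-survival probability), and the ``correlation mass'' $\alpha^2/(1-\alpha^2)$ that measures how the Lasserre conditioning on $R$ biases the red vectors. Setting $\Delta=n^\tau$, expressing independent-set size as $n/\Delta^{1/(3+3c)}$ and rearranging, the existence of a valid threshold reduces exactly to positivity of Chlamtac's function $\lambda_{c,\tau}(\alpha)$ on the stated interval, and the condition $\tau>6/11$ emerges from the limiting case $c\to 1/2$, $\alpha=c/(1+c)=1/3$.

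The hard part, I expect, is not the rounding itself but the Lasserre bookkeeping: one must verify that $O(c)$ rounds genuinely supply the conditional moments used in the analysis, and that after conditioning on $R$ the constraints $\|v_{u,\text{red}}\|^2\le 1$ and the orthogonality $\langle v_{u,\text{red}},v_{v,\text{red}}\rangle=0$ on edges still hold in the renormalized system. A secondary obstacle is choosing the seed $R$ algorithmically rather than just in expectation --- this needs either an explicit enumeration over $n^{O(c)}$ candidate seeds (which is polynomial since $c<1/2$) or a careful derandomization that preserves the precise exponent $1/(3+3c)$ appearing in the final bound.
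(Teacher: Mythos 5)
The paper does not prove this theorem; it cites it verbatim as a black-box result from Chlamtac (\cite{Chl07}, Theorem 15) and applies it only to handle the low-degree regime via Lemma~\ref{lem:sdp}. There is therefore no proof in this paper to compare your sketch against. What you have written is a plausible high-level reconstruction of Chlamtac's own argument in \cite{Chl07}: a constant-level Lasserre-style SDP relaxation producing locally consistent color vectors with edge orthogonality, Karger--Motwani--Sudan Gaussian threshold rounding parametrized by $\alpha$, and conditioning on a constant-size seed set to shrink edge survival, with $\lambda_{c,\tau}(\alpha)$ arising from balancing the expected candidate-set size, the cleanup loss scaled by $\Delta = n^\tau$, and the correlation term $\alpha^2/(1-\alpha^2)$. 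The two technical points you flag as hard --- verifying that $O(1)$ Lasserre rounds suffice to supply the conditional moments, and derandomizing the seed selection by enumeration over polynomially many candidates --- are indeed where the bookkeeping lives. But to certify the sketch one would have to check it line by line against \cite{Chl07}; this paper offers no proof of Theorem~\ref{thm:eden} to use as the reference.
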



\paragraph*{Two-level neighborhood structure}
The most complex ingredient we get from Blum \cite{Blum94} is a
certain regular second neighborhood structure.  Let $\ds$ be
the smallest degree in the graph $G$. In fact, we shall use the
slightly modified version described in \cite{KT17}.

Unless other progress is made, for some $\ds_1=\tOmega(\ds)$,
in polynomial time \cite{Blum94,KT17},
we can identify a 2-level neighborhood structure $H_1=(r_0,S_1,T_1)$ in $G$
consisting of:
\begin{itemize}\packlist
\item A root vertex $r_0$. We assume $r_0$ is colored red in any
3-coloring.
\item A first neighborhood $S_1\subseteq N(r_0)$ of size at least $\ds_1$.
\item A second neighborhood $T_1\subseteq N(S_1)$ of size at most $n/k$.
The sets $S_1$ and $T_1$ may overlap.
\item The edges between vertices in $H_1$ are the same as those in $G$.
\item The vertices in $S_1$ all have degrees at least $\ds_1$ into $T_1$.
\item For some $\dt_1$ the degrees from $T_1$ to $S_1$ are all between $\dt_1$ and
$5\dt_1$.
\end{itemize}

\section{Recursive combinatorial coloring}\label{sec:focs12}

Our algorithm follows the same pattern as that in \cite{KT17}, but adds in certain side-cuts
leading us to both stronger and cleaner bounds.
Below we describe the algorithm, emphasizing the novel additions.

First, we will be able to have different and stronger parameters than those in \cite{KT17}. Given a 3-colorable graph with
minimum degree $\ds\geq\sqrt n$, we will make progress 
towards a $k$-coloring for
\begin{equation}\label{eq:k}k
=2^{\,(\log\log n)^2}\sqrt{n/\ds}
\end{equation}
Since $\nh=n/k^2$, this is equivalent to 
\begin{equation}\label{eq:delta-nh}
\ds/\nh=4^{\,(\log\log n)^2}.
\end{equation}
We will use the above 2-level neighborhood
structure $H_1=(r_0,S_1,T_1)$, and we are
going to recurse on induced subproblems
$(S,T)\subseteq (S_1,T_1)$ defined in terms of subsets $S\subseteq
S_1$ and $T\subseteq T_1$. The edges considered in the subproblem
are exactly those between $S$ and $T$ in $G$. This
edge set is denoted $E(S,T)$. 

A pseudo-code for our whole algorithm is presented in Algorithm \ref{alg:main}.
\begin{algorithm}
\caption{Seeking progress towards $\tO(k)$ coloring}\label{alg:main}
\begin{small}
let $(S_1,T_1,\ds_1,\dt_1)$ be the initial two-level structure from Section \ref{sec:blum}\;
\For(\hspace{2em}// outer loop, round $j$){$j\asgn 1,2,...$}{
$(S,T)\asgn(S_j,T_j)$\;
\Repeat(\hspace{2em}// inner loop. Iterative version of recursive
{$\monochromatic(S_j,T_j)$}){$|E(S,T)|< \dt_j|T|/2$}{
\lIf{$|S|\leq 1$}{\Return ``Error A''}
$U\asgn \{v\in T\,|\,d_S(v)\geq \dt_j/4\}$\;
\lIf{$|U|<\nh$}{\Return ``Error B''}
check $U$ multichromatic with Lemma \ref{lem:blum};\hspace{2em} // if not, progress was found and we are done\\
\If{ $\exists v\in U$ such that {\em $\cutcolornew(S,T,v)$} returns ``sparse cut around $(X(v),Y(v))$''}
{$(S,T)\asgn (X,Y)$} 
\lElse{\Return ``$S$ is monochromatic in every 3-coloring,
so monochromatic progress found''
}
}
$(X',Y')\asgn\sidecut(S,T,S_j,T_j)$\quad //NEW\\
\lIf{$|Y'|<|T|$}{$(S,T)\asgn (X',Y')$\quad //NEW}
$(S_j,T_j,\ds_j,\dt_j)=\regul(S,T)$\;
}
\end{small}
\end{algorithm}
Note that the
algorithm has two error events: Error A and Error B. We will make sure they never happen, and that the algorithm terminates, implying that
it does end up making progress. 

As in  \cite{KT17}, our
algorithm will have two loops; the inner and the outer. Both are combinatorial. 
When we
start outer loop $j$ it
is with a quadruple
$(S_j,T_j,\ds_j,\dt_j)$
where $(S_j,T_j)\subseteq (S_1,T_1)$ is {\em regular\/} in the sense that:
\begin{itemize}\packlist
\item The degrees from $S_j$ to $T_j$ are at least $\ds_j$.
\item The degrees from $T_j$ to $S_j$ are between $\dt_j$ and $5\dt_j$.
\end{itemize}
We note that the root vertex $r_0$ is not changed, and we will always view it as red in an unknown red-green-blue coloring.
In addition to regularity, we have the following two pre-conditions:
\begin{eqnarray}
\ds_j&=&\ds/(\log n)^{O(j)}=\omega(\nh).\label{eq:dsj}\\
\dt_j&\geq&4\ds_j/\nh.\label{eq:dtj}
\end{eqnarray}
Proving these pre-conditions is a non-trivial part of our later analysis.

\subsection{Inner loop}
The first thing we do in iteration $j$ of the inner loop is that we enter the inner loop which is almost
identical to that in \cite{KT17}.
Within the inner loop, we say that a vertex in $T$ has {\em high $S$-degree\/} if
its degree to $S$ is bigger than
$\dt_j/4$, and we will make sure that any subproblem $(S,T)$ considered
satisfies:
\begin{invariants}
\item\label{inv:high-degrees} We have at least $\nh$ vertices of high $S$-degree
in $T$.
\end{invariants}
This invariant implies that Error B never happens.

\subsubsection{Cut-or-color}
The most interesting part of
the inner loop is the 
subroutine $\cutcolor(S,T,t)$
which is taken from \cite{KT17} except for a small change that will be important to us. The input to the subroutine is a problem $(S,T)\subseteq (S_j,T_j)$ that starts with an arbitrary high $S$-degree vertex $t\in T$. It has one of the following outcomes:
\begin{itemize}\packlist
\item Some progress toward a $\tO(k)$-coloring. Then we are done,
so we assume that this does not happen.
\item A guarantee that if $r_0$ and $t$ have different colors in
a 3-coloring $C_3$ of $G$, then $S$ is monochromatic in $C_3$.
\item Reporting a ``sparse cut around a subproblem $(X,Y)\subseteq(S,T)$''
satisfying the following conditions:
\begin{invariants}
\item\label{inv:high-start} The  original high $S$-degree
vertex $t$ has all its neighbors to $S$ in $X$, that is,
$N_S(t)\subseteq X$.
\item\label{inv:XtoT-Y} All edges from $X$ to $T$ go to $Y$, so
there are no edges between $X$ and $T\setminus Y$.
\item\label{inv:X-extension} Each vertex $s'\in S\setminus X$ has $|N_Y(s')|< \nh$.
\item\label{inv:Y-extension} Each vertex  $t'\in T\setminus Y$ has $|N_Y(N_{N(r_0)}(t'))|< \nh$.
\end{invariants}
\end{itemize}
We note that in
\cite{KT17}, \ref{inv:Y-extension} only requires 
$|N_Y(N_{S}(t'))|< \nh$. Since
$S\subseteq N(r_0)$, this is
a weaker requirement. This
is why we say that our cuts
are sparser. A pseudo code for our revised \cutcolor\ is presented in Algorithm \ref{alg:cutcolornew}.

\begin{algorithm}\label{alg:cutcolornew}
\caption{$\cutcolornew(S,T,t)$}
$X=N_S(t)$; $Y=N_T(X)$;\\
\Loop{
\If{$X=S$}{\Return ``$S$ is monochromatic in every 3-coloring where $t$ and $r_0$ have different colors''}
\ElseIf(// $X$-extensions){there is $s\in S\setminus X$ such that $|N_{Y}(s)| \geq\nh$}
{check that $N_Y (s)$ is multichromatic in $G$ with Lemma~\ref{lem:blum}\;
add $s$ to $X$}
\ElseIf(// $Y$-extension){there is $t'\in T\setminus Y$ with $|N_Y(N_{N(r_0)} (t'))|\geq\nh$}
{check that $N_Y(N_{N(r_0)}(t'))$ is multichromatic in $G$ with Lemma~\ref{lem:blum}\;
add $t'$ to $Y$}
\Else(// $X\neq S$ and neither an $X$-extension nor a $Y$-extension is possible){
$(X(t), Y(t))\gets(X, Y)$\;
\Return ``sparse cut around $(X(t),Y(t))$''}
}
\end{algorithm}

Below we describe how  \cutcolornew\ works so as to satisfy the invariants, including our revision.
Consider any 3-coloring $C_3$ of $G$. 
$C_3$ is not known to the
algorithm. But, if the algorithm can guarantee that $S$ is monochromatic in $C_3$, then it can correctly declare that ``$S$ is monochromatic in
every 3-coloring where $t$ and $r_0$ have different colors''. Recall that 
$r_0$ is red and assume that $t$ is green in $C_3$. The last color is blue.

The first part of
$\cutcolor$ is essentially the coloring that Blum \cite[\S
  5.2]{Blum94} uses for dense bipartite graphs. Specifically, let $X$ be the neighborhood of $t$ in $S$ and let $Y$ be the
neighborhood of $X$ in $T$. As in \cite{Blum94} we note that all vertices of
$X$ must be blue, and that no vertex in $Y$ can be blue. We are going
to expand $X\subseteq S$ and $Y\subseteq T$ preserving the following
invariant:
\begin{invariants}
\item\label{inv:XY} if $r_0$ was red and $t$ was green in $C_3$ then
$X$ would be all blue and $Y$ would have no blue.
\end{invariants}
If we end up with $X=S$, then invariant \ref{inv:XY} implies that $S$ is
monochromatic in any 3-coloring where $r_0$ and $t$ have different colors.

\paragraph{$\bm X$-extension}
Now consider any vertex $s\in S$ whose degree into $Y$ is at least $\nh$.
Using Lemma \ref{lem:blum} we can check that $N_Y(s)$ is
multichromatic in $G$. Since $Y\supseteq N_Y(s)$ has no blue, we conclude that
$N_Y(s)$ is red and green, hence that $s$ is blue.
Note conversely that if $s$ was green, then all its neighbors in $Y$
would have to be red, and then the multichromatic test from
Lemma \ref{lem:blum} would have made progress.
Preserving invariant \ref{inv:XY}, we now add the blue $s$ to $X$ and all neighbors of $s$ in $T$ to
$Y$.  We shall refer to this as an {\em $X$-extension}.

\medskip

We now introduce $Y$-extensions. 
The important point will be that if we do not end
up with $X=S$, and if neither extension is possible, then we have 
a ``sparse cut'' around $(X,Y)$ that we can use for recursion. 

\paragraph{$\bm{Y}$-extension}
Consider a vertex $t'$ from $T\setminus Y$. Let $X'=N_{N(r_0)}(t')$ be its neighborhood in $N(r_0)$. Suppose $|N_Y(X')|\geq\nh$.  Using Lemma
\ref{lem:blum} we check that $N_Y(X')$ is multichromatic in $G$. We
now claim that $t'$ cannot be blue.  Suppose it was; then its
neighborhood has no blue and $N(r_0)$ is only blue and green, so
$X'=N_{N(r_0)}(t')$ must be all green.  Then the neighborhood of $X'$ has no
green, but $Y$ has no blue, so $N_Y(X')$ must be all red,
contradicting that $N_Y(X')$ is multichromatic. We conclude that $t'$
is not blue. Preserving invariant \ref{inv:XY}, we now add $t'$ to $Y$.

\subsubsection{Recursion towards a monochromatic set}

Assuming $\cutcolor$ above, we now review the main recursive algorithm from \cite{KT17}. Our inner loop in Algorithm \ref{alg:main} starts
with what corresponds to an iterative version of
the subroutine $\monochromatic$ from \cite{KT17} which takes as input a subproblem $(S,T)$ with $|S|>1$; otherwise
we get Error A. In the first
round of the inner loop,
we have $(S,T)=(S_j,T_j)$.

\drop{
The pseudo-code is presented in Algorithm \ref{alg:mono}.
\begin{algorithm}
\caption{$\monochromatic(S,T)$}\label{alg:mono}
\begin{small}
let $U$ be the set of high $S$-degree vertices in $T$\;
check that $U$ is multichromatic in $G$ with Lemma \ref{lem:blum};\hspace{2em} // if not, progress found and we are done\\
\If{ there is a $t\in U$ such that
{\em $\cutcolor(S,T,t)$} returns ``sparse cut around $(X,Y)$''}
{recursively call $\monochromatic(X,Y)$}
\Else{\Return ``$S$ is monochromatic in every 3-coloring''}\medskip
\end{small}
\end{algorithm}
}

Let $U$ be the set of high $S$-degree vertices in $T$. By \ref{inv:high-degrees} we have
$|U|\geq\nh$, so we can apply Blum's multichromatic test from Lemma
\ref{lem:blum} to $U$ in $G$.  Assuming that we did not make progress, we know that $U$ is
multichromatic in every valid 3-coloring.  We now apply $\cutcolor$ to
each $t\in U$, stopping only if a sparse cut is found or progress is
made.  If we make progress, we are done, so assume that this does not
happen. If a sparse cut around a subproblem $(X,Y)$ is found, we
recurse on $(S,T)=(X,Y)$.
\drop{\begin{lemma}
    For each $s'\in S$, if $N_T(s')\subseteq Y$, then
    $s'\in X$.
\end{lemma}
\begin{proof}
By \ref{inv:XtoT-Y} applied
recursively, we know that $s'$ has preserved all its neighbors from $T_j$, so $|N_{T}(s')|\geq \ds_j$. However, by \req{eq:dsj} and \req{eq:k-nh}, $\ds_j=\omega(\nh)$, so by \ref{inv:X-extension}, $s'\in X$.
\end{proof}}

The most interesting case is when we get neither progress nor a sparse cut. Here is the important result in \cite{KT17}. 
\begin{lemma}\label{lem:mono} If {\em $\cutcolor$} does not find progress
nor a sparse cut for any high $S$-degree $t\in U$, then
$S$ is monochromatic in every 3-coloring of $G$.
\end{lemma}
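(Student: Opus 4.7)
The plan is to use the guarantee from \cutcolornew\ together with the multichromaticity of $U$. Fix an arbitrary valid 3-coloring $C_3$ of $G$; we must show that $S$ is monochromatic in $C_3$. By hypothesis, \cutcolornew\ made no progress when applied to any $t \in U$, so in particular the multichromatic test on $U$ via Lemma \ref{lem:blum} did not yield progress, which means $U$ is multichromatic in $C_3$. Since $r_0$ is red, ``multichromatic'' together with the fact that $U$ contains at least two distinct colors implies the existence of some $t^\star \in U$ whose color in $C_3$ is different from red (either green or blue).

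Next, I would invoke the behavior of \cutcolornew\ on $t^\star$. By hypothesis, $\cutcolor(S,T,t^\star)$ did not produce progress and did not produce a sparse cut. Inspecting Algorithm \ref{alg:cutcolornew}, the only remaining termination condition is the branch $X = S$, which returns the guarantee ``$S$ is monochromatic in every 3-coloring where $t^\star$ and $r_0$ have different colors.'' The proof of this guarantee is recorded in the analysis of invariant \ref{inv:XY}: the argument was written assuming $t^\star$ is green (so that $X$ is all blue and $Y$ contains no blue), but the same argument works by symmetry if $t^\star$ is blue, swapping the roles of green and blue throughout the $X$- and $Y$-extension steps. In either case, $S$ ends up being a single non-red color.

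Since $t^\star$ has a color different from red in $C_3$, the guarantee applies to $C_3$ and we conclude that $S$ is monochromatic in $C_3$. As $C_3$ was arbitrary, $S$ is monochromatic in every 3-coloring of $G$, yielding monochromatic same-color (Type 0) progress between any two vertices of $S$ (which is non-trivial since $|S|>1$ is ensured by avoidance of Error A earlier in the inner loop).

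The step that requires some care, rather than being a true obstacle, is justifying the color-symmetric version of the \cutcolornew\ guarantee: invariant \ref{inv:XY} is stated for the case ``$t$ is green,'' but for the lemma we need the conclusion whenever $t^\star$ is not red. This is immediate by renaming colors, but it should be made explicit, since otherwise one might worry that a blue $t^\star$ is not covered. Everything else (existence of a non-red vertex in $U$, and the fact that the algorithm exhausted all branches except $X=S$) is a direct reading of the algorithm and of Lemma \ref{lem:blum}.
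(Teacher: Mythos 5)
Your proof is correct and matches the argument that the paper implicitly relies on (the lemma is borrowed from \cite{KT17} and not re-proved here). The key steps are exactly the right ones: fix a 3-coloring $C_3$, use multichromaticity of $U$ to extract a non-red $t^\star\in U$, observe that by hypothesis $\cutcolornew(S,T,t^\star)$ must have returned the ``$X=S$'' branch, and then apply its conditional guarantee, which is triggered because $t^\star$ and $r_0$ differ in $C_3$. The color-symmetry remark (the invariant-\ref{inv:XY} argument was written for $t$ green, but renaming green $\leftrightarrow$ blue covers $t$ blue) is a legitimate point worth making explicit.

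One small inaccuracy in the write-up: you say that since $\cutcolornew$ made no progress, ``in particular the multichromatic test on $U$ via Lemma~\ref{lem:blum} did not yield progress.'' That test on $U$ is not part of $\cutcolornew$ at all — it is performed in the inner loop of Algorithm~\ref{alg:main}, \emph{before} $\cutcolornew$ is ever called. So it is not a consequence of the lemma's hypothesis about $\cutcolor$; rather, it is a separate precondition that is in force whenever Lemma~\ref{lem:mono} is invoked in the algorithm (the paper states ``Assuming that we did not make progress, we know that $U$ is multichromatic in every valid 3-coloring'' just before the lemma). You should cite that standing assumption rather than deriving it from the behavior of $\cutcolornew$. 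With that phrasing fixed, the argument is exactly right.
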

Thus, unless other progress is made, or we stop for other reasons, we end up
with a non-trivial set $S$ that is monochromatic in every 3-coloring, and
then monochromatic progress can be made. However, the correctness demands
that we respect \ref{inv:high-degrees} and only proceed
with a subproblem $(S,T)$ where $T$ has more than $\nh$ high $S$-degree
vertices (otherwise Lemma \ref{lem:blum} cannot be applied to $U$).

As proved in \cite{KT17}, invariant \ref{inv:high-degrees} must be
satisfied
if the average degree from $Y$ to $X$ is at least $\dt_j/2$. The proof
exploits pre-condition \req{eq:dsj} and that the
maximal degree $Y$ to $S$ is at most $5\dt_j$. 
If the average degree from $Y$ to $X$ drops below $\dt_j/2$, we terminate the
inner loop.

This completes our description of the inner loop in Algorithm \ref{alg:main}.
If we have not terminated
with progress, then the final
sparse cut $(X,Y)=(S,T)$
has average degree at most $\dt_j/2$ from $Y$ to $X$.

\paragraph{Regularization}
For now, skipping our new side cuts, we finish iteration $j$ of the outer loop as in \cite{KT17} by ``regularizing'' the degrees of vertices. 
The regularization is described in Algorithm \ref{alg:regul}, and it
is, in itself, fairly standard. Blum \cite{Blum94} used several similar
regularizations. In \cite{KT17} the following lemma is shown. 

\begin{algorithm}
\caption{$\regul(S,T)$}\label{alg:regul}
Let $d_\ell=(4/3)^\ell$\;
Partition the vertices of $T$ into sets
$U_\ell=\{v\in T\,|\,d_S(v)\in [d_\ell,d_{\ell+1})\}$\;
Subject to $d_\ell\geq \avg d_S(T)/2$ let $\ell$ maximize  $|E(U_\ell,S)|$\;
$\dt^{\,r}\asgn d_\ell/4$; $\Delta^r\asgn \avg d_{U_\ell}(S)/4$\;
Repeatedly remove vertices $v\in S$ with $d_{U_\ell}(v)\leq \ds^{\,r}$
and $w\in U_\ell$ with $d_S(w)\leq \dt^{\,r}$\;
$S^{\,r}\asgn S$; $T^{\,r}\asgn U_\ell$\;
\Return $(S^{\,r},T^{\,r},\ds^{\,r},\dt^{\,r})$
\end{algorithm}
\begin{lemma}\label{lem:regul} 
When {\em \regul}$(S,T)$ in Algorithm \ref{alg:regul} returns $(S^{\,r},T^{\,r},\ds^{\,r},\dt^{\,r})$
then  $\Delta^r\geq \avg d_T(S)/(30\log n)$ and $\dt^{\,r}\geq \avg d_S(T)/8$.
The sets $S^{\,r}\subseteq S$ and $T^{\,r}\subseteq T$ are both non-empty.
The degrees from $S^{\,r}$ to $T^{\,r}$ are at least $\ds^{\,r}$ and
the degrees from $T^{\,r}$ to $S^{\,r}$ are
between $\dt^{\,r}$ and $5\dt^{\,r}$.
\end{lemma}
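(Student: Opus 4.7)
The plan is to combine a pigeonhole over the geometric degree buckets with an edge-counting analysis of the peeling loop. The proof splits naturally into an averaging step that selects the bucket $U_\ell$ and justifies the two averaged-degree inequalities, and a peeling step that verifies non-emptiness of $S^r,T^r$ together with the claimed degree regularity.

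For the averaging step, the vertices $v\in T$ with $d_S(v)<\avg d_S(T)/2$ collectively carry strictly fewer than $|E(S,T)|/2$ edges, so the buckets $U_\ell$ with $d_\ell\geq \avg d_S(T)/2$ collectively hold more than half of $|E(S,T)|$. Since each such $d_\ell$ lies in $[\avg d_S(T)/2,n]$, the number of eligible buckets is at most $\log_{4/3}(2n)\leq 3\log n$ for $n$ large enough. Pigeonhole then produces a bucket with $|E(U_\ell,S)|\geq |E(S,T)|/(6\log n)$, and dividing by $|S|$ gives $\avg d_{U_\ell}(S)\geq \avg d_T(S)/(6\log n)$. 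The factor $1/4$ in the definition of $\Delta^r$ then yields $\Delta^r\geq \avg d_T(S)/(24\log n)\geq \avg d_T(S)/(30\log n)$, and $\dt^r=d_\ell/4\geq \avg d_S(T)/8$ is immediate from the constraint $d_\ell\geq \avg d_S(T)/2$ used in the selection.

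For the peeling step, each deleted $v\in S$ had $d_{U_\ell}(v)\leq \ds^r$ at the moment of removal and each deleted $w\in U_\ell$ had $d_S(w)\leq \dt^r$, so the total number of destroyed edges is at most
\[
|S|\ds^r+|U_\ell|\dt^r\leq \frac{|E(U_\ell,S)|}{4}+\frac{|U_\ell|d_\ell}{4}\leq \frac{|E(U_\ell,S)|}{2},
\]
where the last inequality uses $|E(U_\ell,S)|\geq |U_\ell|d_\ell$ from the bucket definition. Hence more than half of the edges of $(S,U_\ell)$ survive, forcing $S^r$ and $T^r$ to be non-empty. The termination condition of the peeling loop directly supplies $d_{T^r}(v)\geq \ds^r$ for every $v\in S^r$ and $d_{S^r}(w)\geq \dt^r$ for every $w\in T^r$; the upper bound on $d_{S^r}(w)$ follows from the bucket definition $d_S(w)<d_{\ell+1}=(4/3)d_\ell$ together with $\dt^r=d_\ell/4$, yielding the claimed $5\dt^r$ bound after absorbing the small constant factor.

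The main obstacle is pure bookkeeping: tallying the logarithmic pigeonhole loss together with the $1/4$ slack in $\Delta^r$ so that they fit inside $1/(30\log n)$, and aligning the geometric spread of a bucket with the declared $5\dt^r$ upper bound. No new structural idea is required beyond the standard regularization steps already used by Blum \cite{Blum94} and \cite{KT17}; the content of the lemma is that, after a pigeonhole over $O(\log n)$ buckets and a bounded amount of peeling, the averages and min/max degrees on the surviving bipartite graph are preserved up to explicit constants.
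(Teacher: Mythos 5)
Your overall strategy (geometric bucketing, pigeonhole, then an edge-charging argument for the peeling phase) is the right one and matches the standard regularization proof from \cite{Blum94,KT17}. The peeling analysis, in particular the bound $|S|\ds^r+|U_\ell|\dt^r\leq |E(U_\ell,S)|/2$ and the ensuing non-emptiness of $S^r,T^r$, is correct. However, there are two concrete gaps in the constants.

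First, the inference ``the vertices $v\in T$ with $d_S(v)<\avg d_S(T)/2$ collectively carry strictly fewer than $|E(S,T)|/2$ edges, \emph{so} the buckets $U_\ell$ with $d_\ell\geq \avg d_S(T)/2$ collectively hold more than half of $|E(S,T)|$'' is a non sequitur. A vertex $v$ with $d_S(v)\in[\avg d_S(T)/2,\,d_{\ell^*})$, where $d_{\ell^*}$ is the smallest power of $4/3$ at least $\avg d_S(T)/2$, is \emph{not} in an eligible bucket even though it is not low-degree. Because of the geometric rounding, $d_{\ell^*}$ can be as large as $(2/3)\avg d_S(T)$, so the ineligible vertices can carry up to $(2/3)|E(S,T)|$ edges and the eligible buckets are only guaranteed a $1/3$ fraction, not a $1/2$ fraction. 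With the $1/3$ fraction and your bound of $3\log n$ buckets, pigeonhole gives $|E(U_\ell,S)|\geq |E(S,T)|/(9\log n)$, hence $\Delta^r\geq \avg d_T(S)/(36\log n)$, which is \emph{weaker} than the claimed $\avg d_T(S)/(30\log n)$. The conclusion can still be rescued, but you must tighten the bucket count: for $n$ large one has $\log_{4/3}(2n)+1\leq 2.5\log_2 n$, which combined with the $1/3$ fraction recovers $\Delta^r\geq\avg d_T(S)/(30\log n)$. As written, the chain of inequalities does not establish the stated bound.

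Second, the $5\dt^r$ upper bound is not obtained by your argument. You have $d_{S^r}(w)\leq d_S(w)<d_{\ell+1}=(4/3)d_\ell=\tfrac{16}{3}\dt^r\approx 5.33\,\dt^r$, and $\tfrac{16}{3}>5$. The remark ``absorbing the small constant factor'' points in the wrong direction here: the derived constant is \emph{larger} than the target, so there is nothing to absorb. If one insists on the exact constants of Algorithm \ref{alg:regul} ($d_\ell=(4/3)^\ell$, $\dt^r=d_\ell/4$), the correct upper bound is $(16/3)\dt^r$, and either the lemma's $5\dt^r$ is a slight misprint or a different bucket ratio is intended; either way you should state the bound you actually prove rather than wave at a factor that goes the wrong way.
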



To complete our explanation of Algorithm \ref{alg:main}, we have to describe our new side cuts, which is done in the next section.

\section{Introducing side cuts}
In this section, we will describe our new ''side cuts''
that can be used as an alternative to the sparse
cuts identified by
the inner loop. In outer round $j$, at
the end of the inner
loop, going through
nested sparse cuts, we
have got to the last
sparse cut $(X,Y)\subseteq (S_j,T_j)$ such
that $|E(X,Y)|<\delta_j|Y|/2$.

Now we are going to identify
a family of side cuts, one for
each $u\in Y$ with $d_{S_j\setminus X}(u)\geq \delta_j/3$. The
{\em side cut\/} $(X'(u),Y'(u))$ is
defined as follows.
\begin{itemize}
    \item $X'(u)=N_{S_j}(u)\setminus X$.
    \item $Y'(u)=N_{T_j}(X'(u))\setminus Y$.
\end{itemize}
Note that the above side cut
is disjoint from the sparse cut $(X,Y)$. 
See Figure \ref{fig1} for this intuition. 

The {\em best side cut\/} $(X'(u),Y'(u))$
is the one with the smallest
$Y'(u)$. Algorithm \ref{alg:inner-new} finds
the best side cut $(X',Y')$
and in Algorithm \ref{alg:main}, it replaces 
the sparse cut $(X,Y)$ if
$Y'$ is smaller than $Y$.

\begin{algorithm}
\caption{$\sidecut(X,Y,S_j, T_j)$}\label{alg:inner-new}
$(X',Y')\asgn (S_j,T_j)$\;
\For{$u\in Y$}{
\If{$d_{S_j\setminus X}(u)\geq\delta_j/3$}{
$X'(u)\asgn N_{S_j}(u)\setminus X$\;
$Y'(u)\asgn N_{T_j}(X'(u))\setminus Y$\;
\lIf{$|Y'(u)|<|Y'|$}{$(X',Y')\asgn
(X'(u),Y'(u))$}
}
}
\Return $(X',Y')$
\end{algorithm}

Incidentally, we note that
the final sparse cut $(X,Y)$ found in external round $j$ is also the sparse cut with the smallest $Y$, so the $(X,Y)$ we end up using is the
one minimizing $Y$ among
all sparse cuts and side cuts considered. 

In Algorithm \ref{alg:main}, the best sparse or side cut $(X,Y)$ gets assigned to $(S,T)$ before
we regularize it, obtaining
the new quadruple $(S_{j+1},T_{j+1},\ds_{j+1},\dt_{j+1}).$  We shall refer to this best cut found in outer iteration $j$ as $(X_j,Y_j)$. This completes our description of Algorithm \ref{alg:main}.

\begin{figure}
\centering
\includegraphics[height=6cm]{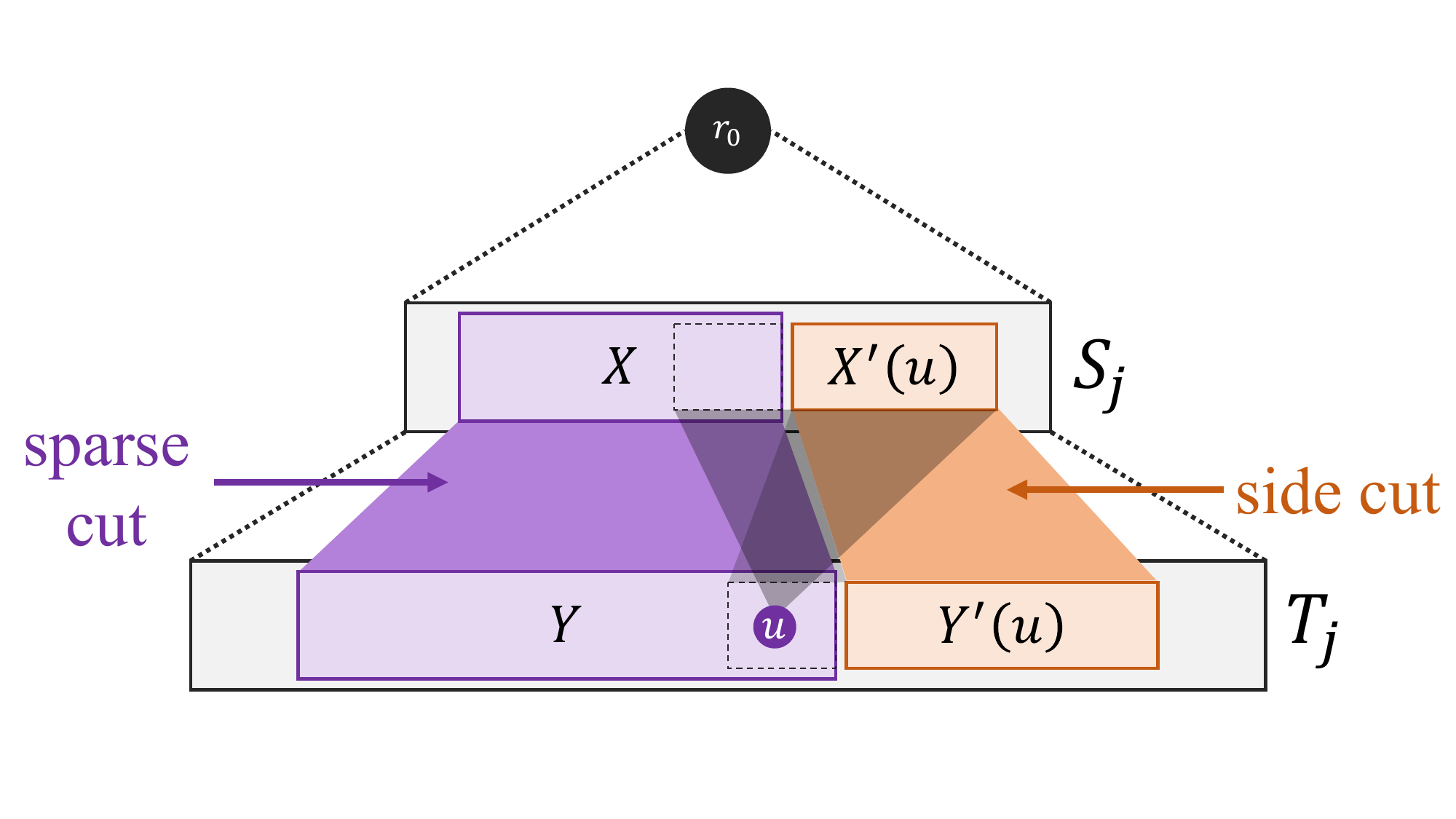}
    \caption{Side cuts and sparse cuts}
    \label{fig1}
\end{figure}

We note that our new side cuts are much simpler than the sparse cuts from \cite{KT17}. The interesting thing is that it makes a substantial difference when we take the best of the two. This is proved in the rest of this paper.

\section{Analyzing outer round $j$ including side cuts}
We are now going to analyze
outer round $j$. The novelty relative to \cite{KT17} is the impact
of side cuts together with the small tightening of \cutcolor. Our main result for iteration $j$ will be
that 
\[
|Y_j|\leq O(\sqrt{\nh |T_j|}).\]
Since $|T_{j+1}|\leq |Y_j|$, this bound can be
used recursively.

\paragraph{Cuts.}
We claim that any sparse or
side cut $(X,Y)$ considered
will satisfy
\begin{eqnarray}
\min d_Y(X)&\geq&\ds_j/2\label{eq:min-ds}\\
|X|&\geq& \dt_j/4\label{eq:lowerX}\\
|Y|&\geq& \ds_j^2/(8\nh)\label{eq:lowerY}
\end{eqnarray}
In \cite{KT17}, this was 
already proved for the sparse
cuts $(X,Y)$. More precisely, for sparse cuts, vertices in $X$ preserve all their neighbors in $T_j$. This
follows recursively from Invariant (iii), and therefore $\min d_Y(X)\geq\ds_j$, implying \req{eq:min-ds}. Also \req{eq:lowerX} follows because $X$ contains the
neighborhood of a high $S$-degree vertex, which means
degree at least $\dt_j/4$. Finally,  \cite[Eq. (17)]{KT17} states
$|Y|\geq\ds_j^2/(2\nh)$ implying \req{eq:lowerY}. 

\begin{lemma}\label{lem2}
Each side cut $(X'(u),Y'(u))$
satisfies conditions
in \req{eq:min-ds}--\req{eq:lowerY}.
\end{lemma}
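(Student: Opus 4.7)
The plan is to verify the three bounds \req{eq:min-ds}, \req{eq:lowerX}, \req{eq:lowerY} in turn, using (a) regularity of $(S_j,T_j)$, (b) the pre-conditions \req{eq:dsj} and \req{eq:dtj}, and (c) invariant \ref{inv:X-extension} propagated across the nested chain of sparse cuts leading to the final $(X,Y)$.

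First, \req{eq:lowerX} is immediate: by construction of the side cut,
\[
|X'(u)| = |N_{S_j}(u)\setminus X| = d_{S_j\setminus X}(u),
\]
and the algorithm only forms a side cut for $u$ when $d_{S_j\setminus X}(u)\geq \dt_j/3\geq \dt_j/4$.

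For \req{eq:min-ds}, the central observation is that every vertex $s'\in S_j\setminus X$ satisfies $|N_Y(s')|<\nh$. I would prove this by induction along the sequence of nested sparse cuts $(S_j,T_j)=(X_0,Y_0)\supseteq (X_1,Y_1)\supseteq\cdots\supseteq(X,Y)$ produced by the inner loop: if $s'$ first drops out at step $i$, i.e.\ $s'\in X_{i-1}\setminus X_i$, then \ref{inv:X-extension} (applied to the $\cutcolornew$ call at step $i$) gives $|N_{Y_i}(s')|<\nh$, and since $Y\subseteq Y_i$ the same bound holds for $|N_Y(s')|$. Now for any $s'\in X'(u)\subseteq S_j\setminus X$, the regularity of $(S_j,T_j)$ gives $d_{T_j}(s')\geq \ds_j$, and since all $T_j$-neighbors of $s'$ lie in $N_{T_j}(X'(u))=Y'(u)\cup (\text{something in }Y)$ by definition of $Y'(u)$, we get
\[
d_{Y'(u)}(s') \geq d_{T_j}(s') - |N_Y(s')| \geq \ds_j - \nh \geq \ds_j/2,
\]
where the last inequality uses pre-condition \req{eq:dsj} ($\ds_j=\omega(\nh)$).

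Finally, for \req{eq:lowerY}, I would apply Lemma \ref{lem:large-neighborhoods} with $Z=X'(u)$ and $U=Y'(u)$, with average degree $d\geq \ds_j/2$. This yields
\[
|Y'(u)| \;\geq\; \min\{\ds_j/(2\nh),\ |X'(u)|\}\cdot \ds_j/4.
\]
If the minimum is $\ds_j/(2\nh)$, we immediately obtain $|Y'(u)|\geq \ds_j^2/(8\nh)$ as required. Otherwise, using $|X'(u)|\geq \dt_j/4$ and pre-condition \req{eq:dtj} ($\dt_j\geq 4\ds_j/\nh$), we get
\[
|Y'(u)| \;\geq\; |X'(u)|\cdot \ds_j/4 \;\geq\; (\dt_j/4)(\ds_j/4) \;\geq\; \ds_j^2/(4\nh),
\]
which is even stronger.

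The main conceptual obstacle is the first step for \req{eq:min-ds}: checking that \ref{inv:X-extension}, which is stated for a single invocation of $\cutcolornew$, correctly propagates through all the nested cuts so that the bound $|N_Y(s')|<\nh$ applies uniformly to every $s'\in S_j\setminus X$, not only to those vertices that fell out at the last recursive step. Once that telescoping argument is in place, the remaining two bounds reduce to straightforward arithmetic using the pre-conditions already available for round $j$.
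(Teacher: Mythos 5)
Your proof is correct and follows essentially the same route as the paper: \req{eq:lowerX} from the $\dt_j/3$ threshold, \req{eq:min-ds} from $d_{T_j}(s')\geq\ds_j$ combined with $|N_Y(s')|<\nh$ via invariant~\ref{inv:X-extension}, and \req{eq:lowerY} via Lemma~\ref{lem:large-neighborhoods} together with pre-condition~\req{eq:dtj}. If anything you are slightly more careful than the paper, which invokes invariant~\ref{inv:X-extension} in one line without spelling out the telescoping across the nested chain of sparse cuts that makes $|N_Y(s')|<\nh$ hold for \emph{every} vertex of $S_j\setminus X$ rather than only those dropped in the final $\cutcolornew$ call.
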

\begin{proof}
First, we note that \req{eq:lowerX} holds because the side cut is only defined if $d_{S_j\setminus X}(u)=|X'(u)|\geq \delta_j/3$.

We now observe that any vertex $v' \in S\setminus X$ has at most $\nh$ neighbors in $Y$ by invariant \ref{inv:X-extension}. Moreover,  by \req{eq:dsj},
$\ds_j=\tomega(\nh)$, so the degree from $v' \in X'(u)$ to $Y'(u)$ is at least $\ds_j-\nh>\ds_j/2$, which implies \req{eq:min-ds}. 

For \req{eq:lowerY}, 
again by \req{eq:min-ds}
the degree from any vertex in $X'(u)$ to $Y'(u)$ is at least 
$\ds_j/2$.  
By Lemma~\ref{lem:large-neighborhoods}, 
$|Y'(u)|\geq \min\{(\ds_j/2)/\nh,|X'(u)|\}\,\ds_j/4$. 
We have $|X'(u)| \geq \dt_j/3$ and
by pre-condition \req{eq:dtj}, we
have $\dt_j\geq 4\ds_j/\nh$, so 
$|X'(u)| \geq (4\ds_j/3)/\nh$. Therefore the min evaluates
to the first option, so we get
$|Y'(u)|\geq \frac{\ds_j^2}{8\nh}$, as required in \req{eq:lowerY}. 
\end{proof}

\paragraph{Final sparse cut.} 
In the rest of this section, let $(X,Y)$ 
denote the last sparse cut considered in outer iteration $j$. The best side cut is then found by calling $\sidecut(X,Y,S_j,T_j)$.

Let $Y'' \subseteq Y$ be 
the set of vertices $u\in Y$ such that at least $\dt_j/3$ of the
$S_j$-neighbors of $u$ are in $S_j \backslash X$. Then $Y''$ are exactly the vertices
for which side cuts $(X'(u),Y'(u))$ are defined.
\begin{lemma}\label{lem:lem31}
We have at least $\dt_j|Y|/6$ edges from $Y''$ to  $S_j \backslash X$. 
\end{lemma}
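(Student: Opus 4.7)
The plan is a direct double-counting argument that partitions edges from $Y$ into pieces we can bound.

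First I would lower bound the total number of edges from $Y$ to $S_j$. Since $(S_j,T_j)$ is regular and $Y\subseteq T_j$, every vertex $v\in Y$ has $d_{S_j}(v)\geq \dt_j$, giving $|E(Y,S_j)|\geq \dt_j|Y|$.

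Next I would use the termination condition of the inner loop. The lemma's setup tells us $(X,Y)$ is the last sparse cut considered in outer round $j$, which by the inner-loop exit condition satisfies $|E(X,Y)|<\dt_j|Y|/2$. Subtracting, the number of edges from $Y$ to $S_j\setminus X$ is strictly greater than $\dt_j|Y|-\dt_j|Y|/2 = \dt_j|Y|/2$.

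Finally I would split the edges from $Y$ to $S_j\setminus X$ according to whether their $Y$-endpoint lies in $Y''$ or not. By definition of $Y''$, every $u\in Y\setminus Y''$ has $d_{S_j\setminus X}(u)<\dt_j/3$, so the edges from $Y\setminus Y''$ to $S_j\setminus X$ number at most $(\dt_j/3)|Y\setminus Y''|\leq (\dt_j/3)|Y|$. Combining with the previous paragraph yields
\[
|E(Y'',S_j\setminus X)| \;>\; \frac{\dt_j|Y|}{2} - \frac{\dt_j|Y|}{3} \;=\; \frac{\dt_j|Y|}{6},
\]
which is the claimed bound. There is no real obstacle here; the only subtlety is being careful to invoke the correct termination condition for the \emph{final} sparse cut $(X,Y)$ (namely $|E(X,Y)|<\dt_j|Y|/2$, as enforced by the inner-loop stopping criterion in Algorithm \ref{alg:main}) rather than some intermediate sparse cut, and to use that $Y\subseteq T_j$ inherits the regularity lower bound $\dt_j$ on its $S_j$-degree.
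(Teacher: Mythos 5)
Your proof is correct and follows essentially the same route as the paper's: lower-bound $|E(Y,S_j\setminus X)|$ by $\dt_j|Y|/2$ using the regularity bound together with the inner-loop termination condition, then subtract the at most $\dt_j|Y|/3$ edges contributed by $Y\setminus Y''$. The only cosmetic difference is that you spell out the intermediate step $|E(Y,S_j)|\geq\dt_j|Y|$ explicitly, which the paper leaves implicit.
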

\begin{proof}
The loop
over sparse cuts in Algorithm \ref{alg:main} only terminates because $|E(S,T)|< \dt_j|T|/2$ where
$(S,T)$ is our $(X,Y)$. Therefore we must have $|E(Y,S_j\setminus X)|\geq \dt_j|Y|/2$. 

By definition,
the vertices
from $Y\setminus Y''$ have less than 
$\dt_j/3$ edges to $S_j\setminus X$ amounting to a total of
less than $|Y|\dt_j/3$ edges to $S_j\setminus X$. Thus
we must have at least $\dt_j|Y|/2-|Y|\dt_j/3=\dt_j|Y|/6$
edges from $Y''$ to $S_j\setminus X$.
\end{proof}


We define 
\[\mu=\frac{|Y_j|\nh}{\ds_j^2}.\] 
Since $Y_j$ was a $Y$ from a sparse or side cut $(X,Y)$,
\req{eq:lowerY} implies $\mu\geq 1/8$.  We now show the following key lemma, which relates to Lemma 13 in \cite{KT17} (that is the most important in their analysis). 

\begin{lemma}\label{lem:lem3}
The number of edges from $Y''$ to
$S_j \backslash X$ is at most
\begin{equation}\label{eq:loss}
|T_j\setminus Y|\,\frac{5\dt_j\nh^2}{\mu \ds_j^2
  }.
\end{equation}
\end{lemma}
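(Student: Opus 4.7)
The plan is a double count on pairs $(u,v)$ with $u\in Y''$ and $v\in Y'(u)$, combining the extremality of $Y_j$ with the strengthened $Y$-extension introduced in \cutcolornew. For the lower bound, note that every $u\in Y''$ has $d_{S_j\setminus X}(u)\geq \dt_j/3$, so its side cut $(X'(u),Y'(u))$ is examined by \sidecut, and $Y_j$ is the minimum of $|Y|$ and of all $|Y'(u)|$ produced. Hence $|Y'(u)|\geq |Y_j|$ for every $u\in Y''$, giving $\sum_{u\in Y''}|Y'(u)| \geq |Y''|\cdot|Y_j|$.

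For the matching upper bound, I switch the order of summation: $\sum_{u\in Y''}|Y'(u)| = \sum_{v\in T_j\setminus Y}|\{u\in Y'' : v\in Y'(u)\}|$. For fixed $v$, the condition $v\in Y'(u)$ forces $v$ and $u$ to share a common neighbor in $X'(u)\subseteq S_j\setminus X$, and since $S_j\subseteq N(r_0)$, the set of such $u$ lies inside $N_Y(N_{N(r_0)}(v))$.

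The crucial step is the claim that $|N_Y(N_{N(r_0)}(v))|<\nh$ for every $v\in T_j\setminus Y$. This is where the modification in \cutcolornew---phrasing the $Y$-extension invariant \ref{inv:Y-extension} with $N(r_0)$ rather than the shrinking current $S$---becomes essential. Tracking the nested sparse cuts $T_j = Y_0 \supseteq Y_1 \supseteq \cdots \supseteq Y_{\textnormal{final}} = Y$, each $v\in T_j\setminus Y$ lies in some $Y_{i-1}\setminus Y_i$; termination of the $i$-th \cutcolornew\ call without a $Y$-extension on $v$ gives $|N_{Y_i}(N_{N(r_0)}(v))|<\nh$ via \ref{inv:Y-extension}; and $Y\subseteq Y_i$ then yields $|N_Y(N_{N(r_0)}(v))|\leq|N_{Y_i}(N_{N(r_0)}(v))|<\nh$. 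The point is that $N(r_0)$ is the same anchor at every recursion level, so the bound propagates; the weaker formulation of \cite{KT17} in terms of the current $S$ would not. Summing over $v$ gives $\sum_{u\in Y''}|Y'(u)| \leq \nh\cdot|T_j\setminus Y|$.

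Combining the two bounds, $|Y''|\leq \nh|T_j\setminus Y|/|Y_j|$. Since $Y''\subseteq T_j$ and $(S_j,T_j)$ is regular, every $u\in Y''$ has $d_{S_j\setminus X}(u)\leq d_{S_j}(u)\leq 5\dt_j$, so
\[
|E(Y'',S_j\setminus X)| \;\leq\; 5\dt_j\,|Y''| \;\leq\; \frac{5\dt_j\,\nh\,|T_j\setminus Y|}{|Y_j|} \;=\; |T_j\setminus Y|\cdot\frac{5\dt_j\,\nh^2}{\mu\,\ds_j^2},
\]
using $\mu=|Y_j|\nh/\ds_j^2$. The hard part is the propagation step for the $Y$-extension bound; everything else is routine counting, and the argument would lose a factor of $\dt_j$ if one relied only on the weaker \ref{inv:Y-extension} of \cite{KT17}.
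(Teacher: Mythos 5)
Your proof is correct and follows essentially the same approach as the paper's: both rest on double-counting pairs, using extremality of $Y_j$ for the lower bound, the strengthened invariant \ref{inv:Y-extension} with $N(r_0)$ as fixed anchor (propagated down the nested sparse cuts via $Y\subseteq Y^{(i)}$) for the $\nh$-upper bound, and $5\dt_j$ for the final degree step. The only cosmetic difference is that the paper first bounds the larger sum $\sum_{y\in Y}|N_{T_j}(N_{S_j}(y))\setminus Y|$ and then restricts to $Y''$ via $Y'(y)\subseteq N_{T_j}(N_{S_j}(y))\setminus Y$, whereas you double-count $\sum_{u\in Y''}|Y'(u)|$ directly by summing over $v\in T_j\setminus Y$ the number of $u\in Y''$ with a common $S_j$-neighbor --- both land on $\sum_{u\in Y''}|Y'(u)|\leq\nh|T_j\setminus Y|$ and identify the same key role of the $N(r_0)$ anchor.
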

\begin{proof}
We have
\begin{align*}
\sum_{y\in Y} |N_{T_j}(N_{S_j}(y))\setminus Y|&=
|\{y\in Y, u\in {T_j}\setminus Y\mid N_{S_j}(y)\cap N_{S_j}(u)\neq\emptyset\}|\\
&=\sum_{u\in {T_j}\setminus Y} |N_Y(N_{{S_j}}(u))|.
\end{align*}
Consider any $u\in {T_j}\setminus Y$. During
the inner loop, there was a first
sparse cut $(X',Y')\supseteq (X,Y)$ with
$u\not\in Y'$, and then, by Invariant \ref{inv:Y-extension}, we
had 
$|N_{Y'}(N_{N(r_0)}(u))|<\nh$. 
Since
$Y'\supseteq Y$ and $N(r_0)\supseteq S_j$, this implies 
$|N_Y(N_{S_j}(u))|<\nh$. 
Thus we have proved
\[\sum_{y\in Y} |N_{T_j}(N_{S_j}(y))\setminus Y|\leq \,|{T_j}\setminus Y|\nh.\]


We will now relate the above bound to our side cuts $(X'(y), Y'(y))$ for $y\in Y''$.
By definition,  
\[Y'(y)=N_{T_j}(N_{S_j}(y)\setminus X)\setminus Y\subseteq
N_{T_j}(N_{S_j}(y))\setminus Y.\] 
Moreover,
$Y''\subseteq Y$, so we get
\begin{align}
\sum_{y\in Y''} |Y'(y)| \leq |{T_j}\setminus Y| \nh.\label{eq:double-sum'}
\end{align}
Recall that we defined
$\mu$ such that \[\mu \times \frac{\ds_j^2 }{\nh}=|Y_j|\]
and $|Y_j| \leq |Y'(y)|$ for all $y\in Y''$.

In combination with 
\req{eq:double-sum'},
this implies that
\[|Y''|\leq 
\frac{|{T_j}\setminus Y| \nh}{
\mu \times \frac{\ds_j^2 }{\nh} }= |{T_j}\setminus Y|\frac{\nh^2}{
\mu \ds_j^2}.
\]
We also
know that all degrees from $T_j$ to $S_j$ are bounded by $5\dt_j$
and this bounds the size of $X'(y)=N_{S_j}(y) \setminus X$. We now get the desired bound on
the number of edges from $Y''$ to ${S_j} \backslash X$ as
\begin{align*}
\sum_{y\in Y''} |N_{S_j}(y) \setminus X|&\leq
5\dt_j|Y''|\leq
5\dt_j |T_j\setminus Y|\frac{\nh^2}{
\mu \ds_j^2}=|{T_j}\setminus Y|\,\frac{5\dt_j \nh^2}{\mu \ds_j^2
 }.\\[-8ex]\end{align*}
\end{proof}

The number of edges from $Y''$ to
$S_j \backslash X$ are bounded from below 
by $\dt_j|Y|/6$ by
Lemma \ref{lem:lem31} and from above 
by 
$|T_j\setminus Y|\,\frac{5\dt_j\nh^2}{\mu \ds_j^2
  }$ by
Lemma \ref{lem:lem3},
so we get
\begin{equation}
|Y| \leq |{T_j}|(30\nh^2)/(\mu \ds_j^2).\label{eq:upperY}
\end{equation}
Combined with $\mu\times \frac{\Delta_j^2}\nh=|Y_j|\leq|Y|$, 
we get an upper bound on $\mu$:
\begin{equation}
\mu \leq \frac{\nh \sqrt{30 \nh|{T_j}|}}{\ds_j^2 }.\label{eq:upperYnew}
\end{equation}
We now conclude with the main result from our analysis of our new inner loop:
\begin{lemma}\label{lem:upperYnew21}
\begin{equation}
|Y_j|=\mu\times \frac{\Delta_j^2}\nh\leq \sqrt{30 \nh|T_j|}.\label{eq:upperYnew21}
\end{equation}
\end{lemma}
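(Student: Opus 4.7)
The plan is short because the preceding two lemmas already sandwich $|E(Y'', S_j\setminus X)|$ between matching expressions, and Lemma~\ref{lem:upperYnew21} is essentially a packaging of the consequence. I would begin by chaining the lower bound $\dt_j|Y|/6$ from Lemma~\ref{lem:lem31} against the upper bound $|T_j\setminus Y|\cdot 5\dt_j\nh^2/(\mu\ds_j^2)$ from Lemma~\ref{lem:lem3}. Canceling $\dt_j$ and bounding $|T_j\setminus Y|\leq |T_j|$ yields
$$|Y|\leq \frac{30\,|T_j|\,\nh^2}{\mu\,\ds_j^2},$$
which is exactly \req{eq:upperY}.

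Next I would invoke $|Y_j|\leq |Y|$. This holds because Algorithm~\ref{alg:main} assigns $(S,T)\asgn (X',Y')$ only when $|Y'|<|Y|$, so the cut $(X_j,Y_j)$ that is ultimately carried into the regularization is the smaller of the final sparse cut $(X,Y)$ and the best side cut $(X',Y')$. Combined with the definition $\mu=|Y_j|\nh/\ds_j^2$, i.e.\ $|Y_j|=\mu\ds_j^2/\nh$, the displayed inequality becomes
$$\frac{\mu\,\ds_j^2}{\nh}\leq \frac{30\,|T_j|\,\nh^2}{\mu\,\ds_j^2},$$
so $\mu^2\leq 30\,\nh^3|T_j|/\ds_j^4$. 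Taking square roots delivers $\mu\leq \nh\sqrt{30\,\nh|T_j|}/\ds_j^2$, which is the intermediate bound \req{eq:upperYnew}.

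Finally, multiplying through by $\ds_j^2/\nh$ and again using $|Y_j|=\mu\ds_j^2/\nh$ converts this into $|Y_j|\leq \sqrt{30\,\nh|T_j|}$, as required. There is no real obstacle: the combinatorial work has already been carried out in Lemmas~\ref{lem:lem31} and~\ref{lem:lem3}, and the sole new ingredient is the observation that $|Y_j|\leq |Y|$, which is immediate from the structure of the outer loop. The purpose of stating Lemma~\ref{lem:upperYnew21} as a standalone result is precisely that it gives a clean recursion $|T_{j+1}|\leq|Y_j|\leq\sqrt{30\,\nh|T_j|}$ that can be iterated across the outer rounds $j$ in the analysis to follow.
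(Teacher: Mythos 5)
Your argument is exactly the paper's: combine Lemmas~\ref{lem:lem31} and~\ref{lem:lem3} to sandwich $|E(Y'',S_j\setminus X)|$, cancel $\dt_j$ and use $|T_j\setminus Y|\leq|T_j|$ to obtain \req{eq:upperY}, then invoke $|Y_j|\leq|Y|$ and the definition of $\mu$ to derive \req{eq:upperYnew} and finally \req{eq:upperYnew21}. Your added justification of $|Y_j|\leq|Y|$ from the structure of Algorithm~\ref{alg:main} is correct and makes explicit what the paper leaves implicit.
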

Note that if \req{eq:upperYnew21} is smaller than the lower bound $\ds_j^2/(8\nh)$ from \req{eq:lowerY}, then progress must have been made before the round finished and this is what are ultimately hoping for.

\paragraph{Comparison to \cite{KT17}.}
In \cite{KT17}, there are no side cuts and no $\mu$, so they only had an
upper bound corresponding to \req{eq:upperY}
with $\mu\geq 1/8$, that is,
\begin{equation}
|Y_j| \leq O(|{T_j}|(\nh/\ds_j)^2).\label{eq:old-Y}  
\end{equation}
To appreciate the difference, consider the first
round where we only have $|T_1|\leq n/k$. Then
\req{eq:upperYnew21} yields $|Y_1|=O(n/k^{3/2})$, gaining a factor $\sqrt k$ over $n/k$. For comparison, with \req{eq:old-Y},
we gain only a factor $(\nh/\ds_j)^2$ which will be subpolynomial by \req{eq:delta-nh}. 

\paragraph{Proceed to the outer loop} 
Before going to the outer loop, we give requirements to satisfy during the whole outer loop, which is really the same as that in \cite{KT17}.

We say round $j$ is {\em good\/} if
\begin{itemize}
\item the pre-conditions \req{eq:dsj} and \req{eq:dtj} are satisfied at
the beginning of the round.
\item No error is made during the round.
\item \req{eq:min-ds}--\req{eq:lowerY} and \req{eq:upperYnew21} are satisfied as long as
no progress is made.
\end{itemize}
Because both our side cuts
and the original sparse cuts from \cite{KT17} satisfy conditions
in \req{eq:min-ds}--\req{eq:lowerY}, a simple generalization of the
analysis from \cite{KT17} implies:
\begin{lemma}[{\cite{KT17}}]\label{lem:ind-step}
Round $j$ is good if and only if pre-conditions \req{eq:dsj} and \req{eq:dtj} are satisfied.
\end{lemma}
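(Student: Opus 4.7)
The plan is to reduce to the inductive analysis of \cite{KT17}, adapted for our two modifications: (a) the tightening of invariant \ref{inv:Y-extension} in \cutcolornew, and (b) the introduction of side cuts. The ``only if'' direction is immediate, since satisfaction of \req{eq:dsj} and \req{eq:dtj} at the start of round $j$ is one of the clauses defining a good round. For the ``if'' direction, assume the preconditions hold and verify the remaining two clauses: no errors occur during the round, and the cut inequalities \req{eq:min-ds}--\req{eq:lowerY} together with \req{eq:upperYnew21} hold whenever no progress is made.

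To rule out Error B, I would invoke Invariant \ref{inv:high-degrees}: as shown in \cite{KT17}, since every $T$-to-$S$ degree is at most $5\dt_j$ and \req{eq:dsj} gives $\ds_j = \tomega(\nh)$, the number of high-$S$-degree vertices in $T$ stays above $\nh$ as long as $\avg d_Y(X)\geq \dt_j/2$, which is precisely the inner-loop termination test; once $\avg d_Y(X)$ drops below this threshold the loop exits before $U$ becomes too small. Error A ($|S|\leq 1$) is ruled out by \req{eq:lowerX} together with \req{eq:dtj} and \req{eq:dsj}, which force $|X|\geq \dt_j/4 > 1$ for every subproblem considered in the inner loop.

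The inequalities \req{eq:min-ds}--\req{eq:lowerY} hold for sparse cuts by exactly the argument of \cite{KT17}: our tightening of \ref{inv:Y-extension} (replacing $N_S(t')$ by $N_{N(r_0)}(t')$) only causes \cutcolornew\ to perform additional $Y$-extensions, which merely enlarges $Y$ and shrinks $T_j\setminus Y$, and this does not disturb the recursive preservation of $N_T(s')$-neighborhoods used to derive the three inequalities there. For side cuts, all three inequalities are established in Lemma \ref{lem2}. Finally, \req{eq:upperYnew21} is exactly Lemma \ref{lem:upperYnew21}, whose proof critically uses the tightened \ref{inv:Y-extension} to bound $\sum_{u\in T_j\setminus Y}|N_Y(N_{S_j}(u))|\leq |T_j\setminus Y|\,\nh$. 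The main obstacle is confirming that interleaving sparse and side cuts in Algorithm \ref{alg:main} does not disrupt any step of the \cite{KT17} inductive analysis; but because both cut types satisfy the same structural guarantees \req{eq:min-ds}--\req{eq:lowerY}, the generalization is routine, and the analysis proceeds identically regardless of which cut is taken.
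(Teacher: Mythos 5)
Your proposal is correct and follows essentially the same route as the paper: for this lemma the paper gives only the one-line observation that both sparse and side cuts satisfy \req{eq:min-ds}--\req{eq:lowerY}, and then defers to a ``simple generalization'' of the inductive argument in \cite{KT17}. Your account just makes that generalization explicit --- handling Errors A and B, citing Lemma~\ref{lem2} for side cuts and Lemma~\ref{lem:upperYnew21} for \req{eq:upperYnew21}, and noting that the tightened Invariant~\ref{inv:Y-extension} only enlarges $Y$ --- so the substance matches the paper's intent.
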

The main difference is that our new bound \req{eq:upperYnew21} makes it much easier to satisfy the pre-conditions.

\section{Analysis of outer loop}\label{sec:outer-loop}

Following the pattern in \cite{KT17}, but using our new Lemma \ref{lem:ind-step}, we will prove an inductive statement that implies that the outer loop in Algorithm \ref{alg:main} continues with no errors until we make progress with a good
coloring. The result, stated below, is both 
simpler and stronger
than that in \cite{KT17}.
\begin{theorem}\label{thm:restrictions}
Consider a 3-colorable graph with minimum degree $\ds\geq\sqrt n$. Let 
\begin{equation}\label{eq:c-bound}
k=2^{(\log\log n)^2}\sqrt{n/\ds}.
\end{equation}
Algorithm \ref{alg:main} will make only good rounds, and make progress
towards an $\tO(k)$ coloring no later than round $\lfloor \log\log n\rfloor$.
\end{theorem}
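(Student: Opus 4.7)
The plan is to argue by induction on $j$ that, as long as no progress has been found, the pre-conditions \req{eq:dsj} and \req{eq:dtj} hold at the start of round $j+1$; by Lemma \ref{lem:ind-step} every such round is then good, so Lemma \ref{lem:upperYnew21} applies to yield $|Y_j|\leq\sqrt{30\,\nh|T_j|}$. Iterated, this bound shrinks $|T_j|$ so fast that it eventually falls below the lower bound $\ds_j^2/(8\nh)$ from \req{eq:lowerY}, which is only consistent with the algorithm having already made progress in that round. The base case $j=1$ follows from Blum's two-level structure: we have $\ds_1=\tOmega(\ds)$ and $|T_1|\leq n/k$, so $\ds_1/\nh=\tOmega(4^{(\log\log n)^2})$ yields \req{eq:dsj}, while comparing $|E(S_1,T_1)|\geq \ds_1^2$ to $|E(S_1,T_1)|\leq 5\dt_1 n/k$ gives $\dt_1\geq \ds_1^2 k/(5n)$, which dominates $4\ds_1/\nh=4\ds_1 k^2/n$ since our assumption $\ds\geq\sqrt n$ implies $\ds_1\gg k$.

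For the inductive step at $j\geq 1$, Lemma \ref{lem:regul} combined with \req{eq:min-ds} gives
\begin{equation*}
\ds_{j+1}\ \geq\ \frac{\avg d_{Y_j}(X_j)}{30\log n}\ \geq\ \frac{\ds_j}{60\log n},
\end{equation*}
so $\ds_j=\ds/(\log n)^{O(j)}$, and \req{eq:dsj} survives for $j\leq\lfloor\log\log n\rfloor$ thanks to the super-polylog slack $\ds/\nh=4^{(\log\log n)^2}$. For \req{eq:dtj}, I would combine Lemma \ref{lem:regul} with \req{eq:min-ds} and \req{eq:lowerX}:
\begin{equation*}
\dt_{j+1}\ \geq\ \frac{\avg d_{X_j}(Y_j)}{8}\ \geq\ \frac{\ds_j|X_j|}{16|Y_j|}\ \geq\ \frac{\ds_j\dt_j}{64|Y_j|},
\end{equation*}
and then substitute $|Y_j|\leq\sqrt{30\,\nh|T_j|}$ from Lemma \ref{lem:upperYnew21}. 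Dividing by $\ds_{j+1}/\nh$ and using $\ds_{j+1}\leq\ds_j$, one obtains a recurrence showing that the ratio $\dt_j\nh/\ds_j$ is preserved up to polylog factors, as long as $|T_j|/\nh$ shrinks sufficiently fast — exactly what the main recurrence will provide.

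Setting $t_j=|T_j|/\nh$, Lemma \ref{lem:upperYnew21} yields $t_{j+1}\leq\sqrt{30\,t_j}$ with $t_1\leq n/(k\nh)=k$; iterating gives $t_j\leq 30\cdot k^{1/2^{j-1}}$, which is $O(1)$ once $j\geq\log\log k+O(1)$, in particular for $j=\lfloor\log\log n\rfloor$ since $\log k\leq \tfrac12\log n+(\log\log n)^2$. On the other hand, combining \req{eq:lowerY} with \req{eq:upperYnew21} forces $t_j\geq\Omega((\ds_j/\nh)^4)$, which is $\tomega(1)$ by pre-condition \req{eq:dsj}. These two estimates must therefore collide no later than round $\lfloor\log\log n\rfloor$, and the only consistent resolution is that the algorithm has already made progress towards $\tO(k)$-coloring. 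The main technical obstacle is propagating \req{eq:dtj} through regularization, since the ratio $\dt_j\nh/\ds_j$ can deteriorate by a polylog factor per round while the size $|Y_j|$ must be controlled via the new side-cut bound from Lemma \ref{lem:upperYnew21}; this is precisely why our choice of $k$ reserves the exponentially large slack $\ds/\nh=4^{(\log\log n)^2}$.
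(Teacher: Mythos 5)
Your overall structure mirrors the paper's proof: establish the base case from Blum's two-level structure, propagate \req{eq:dsj} via the $\ds_{j+1}\geq\ds_j/(60\log n)$ recurrence (absorbed by the $4^{(\log\log n)^2}$ slack), iterate the side-cut bound $|Y_j|\leq\sqrt{30\,\nh|T_j|}$ to drive $t_j=|T_j|/\nh$ down to $O(1)$ in $\log\log n$ rounds, and conclude progress by colliding that upper bound with the lower bound from \req{eq:lowerY}. Those parts are sound.

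However, the handling of pre-condition \req{eq:dtj} has a genuine gap, and the way you describe it is materially wrong. Writing $r_j=\dt_j\nh/\ds_j$, your own recurrence $\dt_{j+1}\geq\ds_j\dt_j/(64|Y_j|)$ together with $|Y_j|\leq\sqrt{30\,\nh|T_j|}$ gives $r_{j+1}/r_j\geq(\ds_j/\nh)/(64\sqrt{30}\sqrt{t_j})$. Since $\sqrt{t_j}\sim k^{1/2^j}$ is \emph{polynomially} large in $n$ for small $j$ (e.g.\ $\sqrt{t_1}\sim\sqrt k$), while $\ds_j/\nh$ is only quasi-polylogarithmic, the ratio deteriorates by a \emph{polynomial} factor per early round, not a polylog factor as you claim. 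Consequently, ``preserved up to polylog factors'' is false, and the slack $\ds/\nh=4^{(\log\log n)^2}$ is not what rescues \req{eq:dtj}; that slack exists only to absorb the $(60\log n)^j$ loss in $\ds_j$. What actually rescues \req{eq:dtj} is the telescope: the total drop over all rounds is $\prod_j k^{1/2^j}\leq k$, the starting value $r_1=\tOmega(\ds/k)$ is polynomially large, and the extra gain factors $(\ds_j/\nh)^j$ accumulate. The paper makes this precise by computing $\dt_{j+1}=\tOmega(\ds_j/\nh)^j\ds^2 k^{1/2^j}/n$ and then invoking $\ds\geq\sqrt n$ to ensure $\ds^2 k^{1/2^j}/n\geq 1$; that inequality is exactly where the theorem's hypothesis $\ds\geq\sqrt n$ enters the inductive step (you use it only in the base case). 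You have all the right ingredients on the table, but the crucial step you flag as ``the main technical obstacle'' is left unresolved and your stated reason for why it works is incorrect; as written, the inductive step for \req{eq:dtj} does not go through.
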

\begin{proof}
We are going to analyze round $j\leq c=\lfloor \log\log n\rfloor$ of the outer loop.
Assuming that all previous
rounds have been good,
but no progress has been made, we will show that 
the preconditions of
round $j$ are satisfied,
hence that round $j$ must
also be good. Later we will also show that progress must be made no later than
round $j=c$.

To show that the preconditions
are satisfied, we will develop
bounds for $|T_j|$, $\ds_j$,  and $\dt_j$ assuming that
all previous rounds have
been good and that no progress has been made. We
already know that $|T_1|\leq n/k$ and
$\ds_1=\tOmega(\ds)$. Moreover,
$5\dt_1\geq \avg d_{S_1}(T_1)\geq
\Delta_1|S_1|/|T_1|\geq \Delta_1^2k/n$, so $\dt_1=\tOmega(\ds^2 k/n)$.

Let us observe that since no progress is made, round $j$ ends up regularizing. As in
previous sections, we let $X_j$ and $Y_j$
denote the last values of $S$ and $T$ in round $j$. Then
\[(S_{j+1},T_{j+1},\ds_{j+1},\dt_{j+1})=\regul(X_j,Y_j).\]
We will now derive inductive bounds on $|T_{j+1}|$, $\ds_{j+1}$, and $\dt_{j+1}$.
\paragraph{Computing $\ds_j$.}
By Lemma\tref{lem:regul} and \req{eq:min-ds},
we get 
\[\ds_{j+1}\geq \avg d_{Y_j}(X_j)/(30\log n) 
\geq \ds_j/(60\log n)\]
Inductively, we conclude for
any $j$ that 
\begin{equation}
\ds_j\geq \ds_1/(60\log n)^{j-1}=\tOmega(\ds/(60\log n)^j).\label{eq:ds-ind}
\end{equation}
We note that $\ds_j$ is 
just a lower bound on the degrees from $S_j$ to $T_j$ and we can assume that it 
is decreasing like in \req{eq:ds-ind}.

\paragraph{Satisfying pre-condition 
\req{eq:dsj} stating $\nh=o(\ds_{j})$.}
By \req{eq:c-bound}
we have 
\begin{equation}\label{eq:ds-nh}
\ds/\nh=\ds k^2/n=4^{(\log\log n)^2}.
\end{equation}
For $j< \log\log n$, we have
\[(60\log n)^j=\tilde o(4^{(\log\log n)^2})\textnormal,\]
so we get \req{eq:dsj} in the stronger from
\begin{equation}\label{eq:dsj*}
\nh=\tilde o(\ds_j).
\end{equation}

\paragraph{Computing $|T_j|$.}
When we start on
the outer loop, we have
$|T_1|\leq n/k=\nh k$ and
the regularization implies
$T_{j+1}\subseteq Y_j$, so
\req{eq:upperYnew21} implies
$|T_{j+1}|\leq|Y_j|\leq \sqrt{30 \nh|T_j|}$.
Thus, assuming that
nothing goes wrong,
\begin{equation}\label{eq:newupperY}
|Y_j|<30\nh k^{1/2^j}.
\end{equation}

\paragraph{Computing $|\dt_j|$.}
Recall that $\dt_1=\tOmega(\ds^2k/n)$ and
consider $j>1$. We have
$\avg d_{X_j}(Y_j)\geq(\ds_j/2)|X_j|/|Y_j|$
by \req{eq:min-ds}, and
$|X_j|\geq \dt_j/4$ by \req{eq:lowerX},
and $|Y_j|<30\,\nh k^{1/2^j}$ by \req{eq:newupperY}. Therefore
\begin{eqnarray*}
\avg d_{X_j}(Y_j)=\Omega\left(\ds_j\dt_j/( \nh k^{1/2^j})\right).
\end{eqnarray*}
By Lemma \ref{lem:regul}
\[    \dt_{j+1}=\tOmega\!\left(\avg d_{X_j}(Y_j)\right)=\tOmega\!\left(\ds_j\dt_j/( \nh k^{1/2^j})\right)=\dt_j\, \tOmega\left(\ds_j /\nh\right)/k^{1/2^j}\]
By induction and since the $\Delta_j$ are assumed to be decreasing, we get
\begin{equation}\label{eq:dtj+1}
    \dt_{j+1}=\tOmega(\ds^2k/n)\, \tOmega\left(\ds_j /\nh\right)^j/ k^{1-1/2^j}
    =\tOmega\left(\ds_j /\nh\right)^j\ds^2k^{1/2^j}/n.
\end{equation}

\paragraph{Satisfying pre-condition 
\req{eq:dtj} stating $\dt_j\geq 4\ds_{j}/\nh$.}
We note that $\ds_j$ is 
just a lower bound on the degrees from $S_j$ to $T_j$
and we can assume that it 
is decreasing like in \req{eq:ds-ind}.
The proof is divided into three different cases: $j=1$, $j=2$, and $j>2$. For $j=1$, we have 
$\dt_1=\tOmega(\ds^2k/n)=(\ds/\nh)
\tOmega(\ds/k)$. Here $\tOmega(\ds/k)=\omega(1)$ by \req{eq:min-degree}, so pre-condition \req{eq:dtj} is satisfied. 

For larger
$j>1$, that is, to make it past the first round, we need lower bounds on the minimum degree $\ds$. 
For $j=2$, by 
\req{eq:dtj+1}, $\dt_2
    =\tOmega\left(\ds_1 /\nh\right)\ds^2k^{1/2}/n$.
However, since $\ds\geq\sqrt n$ and $k>\sqrt{n/\ds}$,
\[\ds^2k^{1/2}/n>\ds^2(n/\ds)^{1/4}/n\geq n^{1/8}.\]
Thus $\dt_2 >\tOmega\left(\ds_1 /\nh\right)n^{1/8}=\tomega(\ds_1/\nh)$.
Above we do have some slack in that it would have sufficed with $\ds>n^{3/7+\Omega(1)}$, but for later rounds, we cannot have 
$\ds$ much smaller than $\sqrt n$. More precisely, for $j> 2$,  by 
\req{eq:dtj+1}, 
\[\dt_{j+1}
    =\tOmega\left(\ds_j /\nh\right)^j\ds^2k^{1/2^j}/n\geq
    \tOmega\left(\ds_j /\nh\right)^2\ds^2k^{1/2^j}/n
    =\tomega(\ds_j /\nh)\,\ds^2k^{1/2^j}/n.
 \]
The last derivation follows from
\req{eq:dsj*} using $j> 2$. With $\ds\geq\sqrt n$, we have $\ds^2k^{1/2^j}/n>1$
 and \req{eq:dtj}  follows.

\paragraph{Progress must be made.}
Assuming only good rounds,
we will now argue that progress must be made no later
than round $\log\log n$. Assuming no
progress, recall from \req{eq:newupperY}
that 
\[|Y_j|<30\,\nh k^{1/2^j}.
\]
From \req{eq:lowerY}, we know that any $Y$ considered (without progress), including $Y_j$,
is of size at least $\ds_j^2/(8\nh)$.
Thus, if we make it through round $j$ without
progress, we must have
\[\ds_j^2/(8\nh)<30\,\nh k^{1/2^j}\iff
(\ds_j/\nh)^2<240\,k^{1/2^j}
\]
By \req{eq:dsj*}, $\ds_j/\nh=\tomega(1)$ for
$j< \log\log n$, and 
$k^{1/2^j}<2$ for $j=\lceil \log\log k \rceil<\log\log n$,
so progress must be made no later than 
round $j=c=\lfloor \log\log n \rfloor$.
This
completes our proof of Theorem \ref{thm:restrictions}. 
\end{proof}

\medskip

Using Theorem \ref{thm:main}, we can show the following: 
\begin{theorem}\label{thm:main2}
In polynomial time, we can color any 3-colorable $n$ vertex graph
using $\tO(n^{0.19747})$ colors.
\end{theorem}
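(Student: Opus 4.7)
The plan is to derive Theorem \ref{thm:main2} by balancing the new combinatorial guarantee of Theorem \ref{thm:main} against Chlamtac's SDP algorithm (Theorem \ref{thm:eden}) via the degree reduction of Lemma \ref{lem:sdp}, and then invoking Lemma \ref{lem:progress} to turn progress into an actual coloring. The single parameter to choose is the degree threshold $d(n)=n^{\tau}$ separating the combinatorial regime (high minimum degree) from the SDP regime (low maximum degree).

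First, I would fix $\tau=0.605073$ and set $d(n)=n^{\tau}$. On the high minimum-degree side, since $\tau>1/2$ we have $d(n)>\sqrt{n}$, so Theorem \ref{thm:main} applies and yields polynomial-time progress towards a $k$-coloring with
\[
k \;=\; n^{o(1)}(n/d(n))^{1/2} \;=\; n^{o(1)}\cdot n^{(1-\tau)/2} \;=\; n^{o(1)}\cdot n^{0.1974635},
\]
which is comfortably $\tO(n^{0.19747})$. This side is essentially plug-and-play given Theorem \ref{thm:main}, and it is what motivated the specific numerical choice $\tau=0.605073$.

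Second, on the low maximum-degree side, I would apply Theorem \ref{thm:eden} with the same $\tau$. The theorem supplies, for any admissible $c<1/2$, Type~1 progress towards a $\tO(n^{\tau/(3+3c)})$-coloring; we need a $c$ such that $\tau/(3+3c)\leq 0.19747$, i.e.\ $c\gtrsim 0.0214$, while simultaneously verifying the positivity of
\[
\lambda_{c,\tau}(\alpha)\;=\;\tfrac{7}{3}+c+\tfrac{\alpha^{2}}{1-\alpha^{2}}-\tfrac{1+c}{\tau}-\bigl(\sqrt{(1+\alpha)/2}+\sqrt{c(1-\alpha)/2}\bigr)^{2}
\]
on the whole interval $\alpha\in[0,c/(1+c)]$. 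This is the genuine work of the proof: one has to pick $c$ (slightly above $0.0214$, well inside the allowed range) and check by a one-variable optimization over $\alpha$ that $\lambda_{c,\tau}$ stays positive. This is exactly the numerical optimization already carried out in \cite{Chl07} and re-used in \cite{KT17}; since our $\tau$ is only a small perturbation of the value used there, and our target exponent is larger than theirs, the feasibility follows by continuity of $\lambda_{c,\tau}$ in $(c,\tau)$. I would state this verification as the one numerical claim of the proof.

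Third, I would combine the two sides via Lemma \ref{lem:sdp} with the near-polynomial functions $d(n)=n^{\tau}$ and $f(n)=n^{0.19747}$. This gives polynomial-time progress towards a $\tO(n^{0.19747})$-coloring on \emph{every} 3-colorable $n$-vertex graph. Finally, since $f(n)=n^{0.19747}$ is near-polynomial, Lemma \ref{lem:progress} converts this progress into an actual $\tO(n^{0.19747})$-coloring in polynomial time, completing the proof of Theorem \ref{thm:main2}. The only real obstacle is the numerical verification of Chlamtac's positivity condition at the slightly shifted $\tau$; everything else is an immediate consequence of Theorems \ref{thm:main} and \ref{thm:eden} together with the black-box reductions already supplied.
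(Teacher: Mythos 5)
Your proposal is correct and takes essentially the same approach as the paper: fix $\tau=0.605073$, use Theorem~\ref{thm:main} (i.e.\ Theorem~\ref{thm:restrictions}) for minimum degree above $n^\tau$, use Chlamtac's Theorem~\ref{thm:eden} (with $c\approx 0.0214$; the paper uses $c=0.0213754$) for maximum degree below $n^\tau$, glue via Lemma~\ref{lem:sdp}, and finish with Lemma~\ref{lem:progress}. The only cosmetic difference is that you make the appeal to Lemma~\ref{lem:progress} and the numerical verification of Chlamtac's positivity condition explicit, whereas the paper asserts both implicitly by citing \cite{Chl07} and \cite{KT17}.
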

\begin{proof}
As in \cite{KT17}, we use Chlamtac's SDP \cite{Chl07} for low degrees. By Theorem \ref{thm:eden}, for maximum degree below
$\ds=n^\tau$ with $\tau={0.605073}$
and $c=0.0213754$, we can make progress towards $k=\tO(n^{0.19747})$ coloring. By Lemma \ref{lem:sdp}, we may therefore assume that
$\ds$ is the minimum degree. This is
easily above $\sqrt n$, and then by Theorem \ref{thm:restrictions}, we get progress
towards 
\[\sqrt{n/\ds}2^{(\log\log n)^2}=
n^{(1-0.605073)/2}2^{(\log\log n)^2}<
n^{0.19747}.\]
Thus, in polynomial time, we  can color any 3-colorable graph with
$\tO(n^{0.19747})$ colors.
\end{proof}
\label{end}

\end{document}